\algnewcommand{\LineComment}[1]{\Statex\hspace{\algorithmicindent}\(\triangleright\) #1}
\algnewcommand\algorithmicforeach{\textbf{for each}}
\algorithmic\endcsname{\itemsep\z@}{\itemsep=0.25ex}{}{}
\newcounter{usesmallsep}
\the\value{usesmallsep}=1
    \newlength{\myitemsep}
    \newlength{\mytopsep}
    \setlist[itemize]{leftmargin=\parindent,parsep=\parskip,
      listparindent=\parindent,itemsep=\myitemsep,topsep=\myitemsep}
    \setlist[enumerate]{leftmargin=\parindent,parsep=\parskip,
      listparindent=\parindent,itemsep=\myitemsep,,topsep=\myitemsep}
    \setlist[description]{font=\bfseries,leftmargin=\parindent,parsep=\parskip,
      listparindent=\parindent,itemsep=\myitemsep,topsep=\myitemsep}
    \newlength{\mypartitlesep}
    \titlespacing{\paragraph}{0pt}{\mypartitlesep}{\mypartitlesep}
    \newlength{\mythmsep}
    \newtheoremstyle{mythmstyle}
      {\mythmsep} % Space above
      {\mythmsep} % Space below
      {\itshape} % Body font
      {} % Indent amount
      {\bfseries} % Theorem head font
      {.} % Punctuation after theorem head
      {.5em} % Space after theorem head
      {} % Theorem head spec (can be left empty, meaning `normal')
    \newtheoremstyle{mydefstyle}
      {\mythmsep} % Space above
      {\mythmsep} % Space below
      {} % Body font
      {} % Indent amount
      {\bfseries} % Theorem head font
      {.} % Punctuation after theorem head
      {.5em} % Space after theorem head
      {} % Theorem head spec (can be left empty, meaning `normal')
    \theoremstyle{mythmstyle}
        \newtheorem{theorem}{Theorem}
        \newtheorem{proposition}[theorem]{Proposition}
        \newtheorem{lemma}[theorem]{Lemma}
        \newtheorem{corollary}[theorem]{Corollary}
        \newtheorem{fact}[theorem]{Fact}
        \newtheorem*{fact*}{Fact}
    \theoremstyle{mydefstyle}
        \newtheorem{definition}{Definition}
        \newtheorem{problem}{Problem}
        \newtheorem{assumption}{Assumption}
        \newtheorem{remark}{Remark}
        \newtheorem{algr}[algorithm]{Algorithm}
    \newenvironment{proof}
        {\vspace{-0.9em}\begin{proof}}
        {\end{proof}\vspace{-0.4em}}
    \theoremstyle{plain}
        \newtheorem{theorem}{Theorem}
        \newtheorem{proposition}[theorem]{Proposition}
        \newtheorem{fact}[theorem]{Fact}
        \newtheorem*{algr*}{Algorithm}
    \theoremstyle{definition}
        \newtheorem{definition}[theorem]{Definition}
        \newtheorem{remark}[theorem]{Remark}
    \setlist[itemize]{leftmargin=\parindent}
    \setlist[enumerate]{leftmargin=\parindent}
    \setlist[description]{font=\bfseries,leftmargin=\parindent}
\newcommand{\Hm}{\mathsf{H}}
\newcommand{\Hmr}{\mathsf{\tilde{H}}}
\newcommand{\Hom}{\mathsf{Hom}}
\newcommand{\Real}{\mathbb{R}}
\newcommand{\fsimp}[2]{\sigma_{#2}}
\newcommand{\id}{\mathtt{id}}
\newcommand{\Pers}{\mathsf{Pers}}
\newcommand{\inv}{^{-1}}
\newcommand{\lbarrowspace}{\;}
\let\leftrightarrowsp\lrarrowsp
\newcommand{\incto}{\hookrightarrow}
\newcommand{\inctosp}[1]{\xhookrightarrow{\lbarrowspace#1\lbarrowspace}}
\newcommand{\bakincto}{\hookleftarrow}
\newcommand{\bakinctosp}[1]{\xhookleftarrow{\lbarrowspace#1\lbarrowspace}}
\newcommand{\given}{\,|\,}
\newcommand{\Set}[1]{\{#1\}}
\let\emptyset\varnothing
\let\intsec\intersect
\let\union\cup
\let\bigunion\bigcup
\newcommand{\Ecal}{\mathcal{E}}
\newcommand{\Dcal}{\mathcal{D}}
\newcommand{\Fcal}{\mathcal{F}}
\newcommand{\Ical}{\mathcal{I}}
\newcommand{\Lcal}{\mathcal{L}}
\newcommand{\Mcal}{\mathcal{M}}
\newcommand{\Ucal}{\mathcal{U}}
\newcommand{\Zbb}{\mathbb{Z}}
\newcommand{\aG}{\alpha}
\newcommand{\bG}{\beta}
\newcommand{\lG}{\lambda}
\newcommand{\LG}{\Lambda}
\newcommand{\oG}{\omega}
\newcommand{\sG}{\sigma}
\newcommand{\tG}{\tau}
\newcommand{\Dim}{p}
\newcommand{\birth}{b}
\newcommand{\death}{d}
\newcommand{\filtcnt}{m}
\newcommand{\simpcnt}{n}
\newcommand{\splx}{\sigma}
\newcommand{\ud}{\Ucal}
\newcommand{\ucplx}{L}
\newcommand{\usimp}{\tau}
\newcommand{\ef}{\Ecal}
\newcommand{\add}[1]{\mathsf{a}#1}
\newcommand{\del}[1]{\mathsf{d}#1}
\newcommand{\dfilt}{\Dcal}
\begin{document}

\title{On Association between Absolute and Relative Zigzag Persistence\thanks{This research is supported by NSF grants CCF 1839252 and 2049010.}}

\author{Tamal K. Dey\hspace{8em}Tao Hou\vspace{1em}\\
% {\footnotesize\red
% Department of Computer Science and Engineering, The Ohio State University.
% \texttt{dey.8,hou.332@osu.edu}
% }\\
{\footnotesize 
Department of Computer Science, Purdue University. 
\texttt{tamaldey,hou145@purdue.edu}
}
}

\date{}

\maketitle
\thispagestyle{empty}

\begin{abstract}
Duality results~\cite{cohen2009extending,de2011dualities}
connecting persistence modules for
absolute and relative homology provides a fundamental
understanding into persistence theory.
In this paper,
we study similar associations in the context of
zigzag persistence.
% for the persistence
% modules that can zigzag. 
% We find that 
% a \emph{weak duality} does exist
% in the zigzag case though 
Our main finding is
a \emph{weak duality} for the so-called \emph{non-repetitive} zigzag filtrations
in which a simplex is never added again after being deleted.
%Unsurprisingly\tao{(or some other word, I just want to separate the two sentences)},
The technique used to prove the duality for
non-zigzag persistence does not extend straightforwardly
to our case. 
Accordingly, taking a different route,
we prove the weak duality 
by converting a non-repetitive filtration to
an \emph{up-down} filtration by a sequence of \emph{diamond switches}~\cite{carlsson2010zigzag}. 
% with the help of
% the Mayer-Vietoris diamond~\cite{carlsson2010zigzag}. 
We then show an application of 
the weak duality result which gives a near-linear algorithm for computing 
% zigzag persistence
% in two cases: (i) 
the $p$-th and a subset of the $(p-1)$-th persistence for
a non-repetitive zigzag filtration of a simplicial $p$-manifold.
% (ii) persistence 
% in all dimensions for
% any zigzag filtration 
% that is non-repetitive.
% For~(i), we obtain an $O(n\log^2 n)$ algorithm 
% for a manifold with $n$ simplices.
% where the manifold has $n$ simplices.
%A consequence of our endeavor in proving the weak duality
%the finding 
Utilizing the fact 
that a non-repetitive filtration admits an up-down filtration
as its canonical form, we further reduce the problem of
computing zigzag persistence for non-repetitive filtrations to the
problem of computing standard persistence for which several
efficient implementations exist.
%convert the up-down filtration into a monotone filtration
%and are also able to relate the barcodes.
%which can leverage any algorithm for standard persistence 
%hence existing efficient software to
%compute zigzag persistence for non-repetitive filtrations. 
Our experiment shows that
this achieves substantial performance gain.
%by leveraging any existing efficient software for standard persistence
%and the performance gain in experiments is substantial.
% For~(ii), we show that the zigzag persistence can be computed using any
% algorithm for standard persistence, after the input zigzag filtration
% is transformed to a monotone filtration in linear time. 
% This allows one to compute zigzag persistence
% by leveraging any existing efficient software for standard persistence.
% Our study also pinpoints the repetitive filtrations as
% the instances for zigzag persistence
% that make computations far more involved.
Our study also identifies repetitive filtrations as instances that
fundamentally distinguish zigzag persistence from the standard persistence.
%Our study also indicates that
%the repetitive filtrations cause a fundamental difference between
%zigzag and standard persistence.
%\sout{Our preliminary experimental
%results confirm that the gain in practice can be substantial.}
%\tao{One upshot? of our study on non-repetitive filtrations
%is that we now locate those instances of zigzag filtrations 
%which make the barcode computation
%far more involved than the non-zigzag ones.
%(maybe this is making the abstract too long?)}\tamal{Yes.}
\end{abstract}

\newpage
\setcounter{page}{1}

% {\red modify the graph zigzag paper so that duality for cohomology is explicitly stated?}

\section{Introduction}
Standard persistent homology defined over a growing sequence 
of simplicial complexes is
a fundamental tool in topological data analysis (TDA). Since the advent of
persistence algorithm~\cite{edelsbrunner2000topological} and its algebraic
understanding~\cite{zomorodian2005computing}, various extensions of the basic concept
have been explored~\cite{carlsson2010zigzag,carlsson2009zigzag-realvalue,cohen2009extending,de2011dualities}.
% creating a need to relate them for better insights leading to
% better implementations. 
These extensions
create a need for finding relations among them for a more coherent theory,
which can potentially lead to improved algorithms.
Two excellent examples catering to this need are
the papers by de Silva et al.~\cite{de2011dualities} and Cohen-Steiner et al.~\cite{cohen2009extending},
which relate standard persistence modules
for \emph{absolute} and \emph{relative}
homology. The original persistence algorithm~\cite{edelsbrunner2000topological} was designed for absolute homology,
but in practice persistence modules involving relative homology do appear~\cite{cohen2009extending,dey2020persistence}. Although
the persistence for relative homology can be computed 
% by  using `coned' complexes,
by `coning' on the complexes,
the authors of~\cite{de2011dualities} show a duality between the modules arising
from absolute and relative homology, 
thereby revealing an equivalence in their computations. 

In this paper, we explore a similar correspondence
between the two types of modules but in the context of {\it zigzag} persistence.
% \tao{(i.e., besides growing, complexes can also {\it shrink})}.
Zigzag persistence introduced by Carlsson and de Silva~\cite{carlsson2010zigzag}
empowered TDA to deal with 
filtrations where both insertion
and deletion of simplices are allowed. 
In practice, allowing deletion of simplices does make the topological tool more powerful.
For example, in dynamic networks~\cite{holme2012temporal},
 a sequence of graphs may not grow monotonically but can also shrink due to
disappearance of vertex connections. 
% \tao{There is a significant difference between standard and zigzag persistence:}

Zigzag persistence possesses some key differences
from standard persistence. For example,
unlike standard (non-zigzag) modules which decompose
into only finite and infinite intervals, zigzag modules
decompose into {\it four types} of intervals (see Definition~\ref{dfn:open-close-bd}). 
This motivates us to have a fresh look at 
relating zigzag modules arising from absolute and relative homology,
and it 
turns out that the technique 
used in the non-zigzag case~\cite{de2011dualities} does not extend straightforwardly. 
Using
the Mayer-Vietoris diamond proposed by Carlsson and de Silva~\cite{carlsson2010zigzag}, we arrive
at a duality result which is {\it weak} in the sense that
an interval from the absolute zigzag module
may correspond to two intervals from the relative zigzag module (see Theorem~\ref{thm:dual}). Furthermore, this (weak) duality exists only for
the {\it non-repetitive} 
filtrations where a simplex is never inserted again
after its deletion (see Definition~\ref{dfn:nonrep-filt}). 
% Repetitive cases pose a fundamental obstacle 
% for associating the absolute and relative barcodes
% which we explain in the appendix.

% The upshots of 
% our weak duality result
% are two efficient algorithms for computing
% zigzag persistence:

We demonstrate an application of the weak duality result
by proposing an efficient algorithm for computing zigzag persistence
% The first one is 
for non-repetitive filtrations of simplicial $\Dim$-manifolds.
The algorithm utilizes
% is is also achieved by
% leveraging 
% our weak duality result along with 
the Lefschetz duality~\cite{cohen2009extending,munkres2018elements} 
and a recent 
% efficient 
near-linear algorithm for zigzag persistence on graphs~\cite{dey2021graph}.
It computes all $\Dim$-th intervals
and
all $(\Dim-1)$-th intervals (except for one type)
%pairs involving $\Dim$- and $(\Dim-1)$-simplices\tao{(indeed not all for $(\Dim-1)$)} 
in near-linear time. This improves the current best time bound of $O(n^{\omega})$ incurred by applying the algorithm for computing general zigzag persistence~\cite{milosavljevic2011zigzag} to this special case.

One insight we gain from proving the weak duality is
% \sout{a finding} 
the fact that a non-repetitive zigzag filtration
admits an {\it up-down} filtration as its canonical form.
We further discover that the up-down filtration can be converted
% show that
% given such a filtration, we can convert it 
into a {\it non-zigzag} filtration 
% in linear time
and the barcode of the resulting filtration can be related to the original one. 
% Utilizing this canonical form,
This leads to
% we propose
an efficient algorithm for 
computing zigzag persistence
for any non-repetitive filtrations.
% \tao{(for general complexes)}. 
% In particular, 
% It is known that 
Note that
algorithms
for zigzag persistence~\cite{carlsson2009zigzag-realvalue,maria2014zigzag} are 
more involved (and hence slower in practice)
than algorithms for
the non-zigzag version though they have the same time complexity~\cite{milosavljevic2011zigzag}.
By leveraging our results, we are able to compute zigzag persistence for
non-repetitive filtrations using any 
efficient software for computing standard persistence.
% algorithm for monotone\tao{standard?} filtrations, 
% for which many
% efficient implementations exist~\cite{}\tao{?}. 
In fact, our experiments show that the
gain in practice is substantial. However, the reader should be aware of
the caveat that the input
filtration has to be non-repetitive. The importance of this condition has not been pointed
out in the literature before. In a sense, this condition  
is satisfied by
a milder version of zigzag persistence called {\it levelset zigzag}~\cite{carlsson2009zigzag-realvalue}.
Last but not the least, our finding helps to pinpoint
the repetitive filtrations as the instances of zigzag filtrations
that fundamentally distinguish zigzag persistence
%\tao{makes a fundamental distinction/difference} 
from the standard persistence.
%\tao{(what `it' references to can be unclear.)}
% \sout{make the barcode computation
% much more involved than the non-zigzag ones.}

% \paragraph{Related works.}
% \tao{I believe we need this?}\tamal{No..let's skip this...we have already mentioned the main related work.}

\section{Preliminaries}

\paragraph{Absolute and relative homology.}
We briefly mention some of the algebraic structures adopted in this paper;
% of homology theory,
see~\cite{hatcher2002algebraic,munkres2018elements} for details.
All homology groups are taken with coefficient $\Zbb_2$
and therefore vector spaces mentioned in this paper are also over $\Zbb_2$.
Let $K$ be a simplicial complex.
For $\Dim\geq 0$, $\Hm_{\Dim}(K)$ denotes the {\it $\Dim$-th homology group} of $K$.
We also let $\Hm_{*}(K)$ denote the homology group of all dimensions, 
i.e., $\Hm_{*}(K)=\bigoplus_{\Dim\geq 0}\Hm_{\Dim}(K)$.
Relative homology groups are frequently used in this paper. 
Specifically, given a {\it relative pair of simplicial complexes}
$(K,L)$ where $L\subseteq K$,
we denote the {\it $\Dim$-th relative homology group} of $(K,L)$
as $\Hm_{\Dim}(K,L)$
and denote the relative homology group of all dimensions as $\Hm_{*}(K,L)$.
Sometimes to differentiate, 
we also call $\Hm_{\Dim}(K)$ the $\Dim$-th {\it absolute} homology group of $K$.
% (similarly for {\it $\Dim$-cycles} and {\it $\Dim$-boundaries} in $K$).

\paragraph{Zigzag module, barcode, and filtration.}
A {\it zigzag module}~\cite{carlsson2010zigzag} (or simply {\it module})
is a sequence of vector spaces 
\[\Mcal: V_0 
% \leftrightarrowsp{\psi_0} 
\leftrightarrow
V_1 
% \leftrightarrowsp{\psi_1} 
\leftrightarrow
\cdots
% \leftrightarrowsp{\psi_{\dgmcnt-1}} 
\leftrightarrow
V_\filtcnt\]
in which
each $V_i\leftrightarrow V_{i+1}$ is a linear map and is either forward, i.e., $V_i\to V_{i+1}$,
or backward, i.e., $V_i\leftarrow V_{i+1}$.
It is known~\cite{carlsson2010zigzag,Gabriel72} that
$\Mcal$ has a decomposition of the form
$\Mcal\simeq\bigoplus_{k\in\LG}\Ical^{[\birth_k,\death_k]}$,
in which each $\Ical^{[\birth_k,\death_k]}$
is a special type of module called {\it interval module} over the interval $[\birth_k,\death_k]$.
% is a direct sum of interval submodules of $\Mcal$.
The (multi-)set of intervals
denoted as
$\Pers(\Mcal):=\Set{[\birth_k,\death_k]\given k\in\LG}$
is an invariant of $\Mcal$
and is called the {\it zigzag barcode} (or simply {\it barcode}) of $\Mcal$.
Each interval in a zigzag barcode is called a {\it persistence interval}.
The following definition characterizes different types of persistence intervals:
\begin{definition}[Open and closed birth/death]
\label{dfn:open-close-bd}
Let $\Mcal:V_0\leftrightarrow V_1\leftrightarrow\cdots\leftrightarrow V_{\filtcnt}$ 
be a zigzag module.
We call
the start of any interval in $\Pers(\Mcal)$ as a {\it birth index}
in $\Mcal$
and call the end of any interval a {\it death index}.
Moreover, 
we call a birth index $\birth$ as {\it closed} 
if $\birth=0$, or $\birth>0$ and $V_{\birth-1}\rightarrow V_\birth$ is a forward map;
otherwise, we call $\birth$ {\it open}.
Symmetrically, 
we call
a death index $\death$ as {\it closed}
if $\death=\filtcnt$, or $\death<\filtcnt$ and $V_{\death}\leftarrow V_{\death+1}$ is a backward map;
otherwise, we call $\death$ {\it open}.
The types of the birth/death ends
classify intervals in $\Pers(\Mcal)$ into four types: 
\emph{closed-closed}, \emph{closed-open}, \emph{open-closed}, and \emph{open-open}. 
\end{definition}
\begin{remark}
In this paper, we always denote a persistence interval
as an interval of integers which is of the form $[\birth,\death]$.
Hence, other than the cases when $\birth=0$ or $\death=\filtcnt$,
% \tamal{other than the indices $0$ and $m$,} 
the designation of open and closed ends of $[\birth,\death]$
is determined by the directions of the maps 
$V_{\birth-1}\leftrightarrow V_\birth$,
$V_{\death}\leftrightarrow V_{\death+1}$ in $\Mcal$. Moreover,
%\end{remark}
%\begin{remark}
if $\Mcal$ is the module for the levelset zigzag~\cite{carlsson2009zigzag-realvalue} of a function,
then the open and closed ends defined above are the same as the open and closed ends
for levelset zigzag.
\end{remark}

A {\it zigzag filtration} (or simply {\it filtration})
is a sequence of simplicial complexes 
\[\Fcal: K_0 \leftrightarrow K_1 \leftrightarrow 
\cdots \leftrightarrow K_\filtcnt\]
in which each
$K_i\leftrightarrow K_{i+1}$ is either a forward inclusion $K_i\incto K_{i+1}$
or a backward inclusion $K_i\bakincto K_{i+1}$.
Note that a forward (resp.\ backward) inclusion
in $\Fcal$ can be considered as an addition (resp.\ deletion)
of zero, one, or more simplices.
% Based on patterns in the inclusion directions,
% we have a special type of zigzag filtrations:
Moreover,
we call
$\Fcal$ an {\it up-down} filtration~\cite{carlsson2009zigzag-realvalue} if 
$\Fcal$ can be separated into two parts such that
the first part contains only forward inclusions
and the second part contains only backward ones,
i.e., $\Fcal$ is of the form $\Fcal: K_0 \incto K_1 \incto 
\cdots \incto K_{\ell} \bakincto K_{\ell+1} \bakincto \cdots \bakincto K_\filtcnt$.

The $\Dim$-th homology groups
induce the {\it $\Dim$-th zigzag module} of $\Fcal$
% ~\cite{carlsson2010zigzag}
\[\Hm_\Dim(\Fcal): 
\Hm_\Dim(K_0) 
% \leftrightarrowsp{\morph{\Fcal}{0}} 
\leftrightarrow
\Hm_\Dim(K_1) 
% \leftrightarrowsp{\morph{\Fcal}{1}} 
\leftrightarrow
\cdots 
% \leftrightarrowsp{\morph{\Fcal}{\dgmcnt-1}} 
\leftrightarrow
\Hm_\Dim(K_\filtcnt) \]
in which
each $\Hm_\Dim(K_i)\leftrightarrow \Hm_\Dim(K_{i+1})$
is a linear map induced by inclusion.
The barcode $\Pers(\Hm_\Dim(\Fcal))$ is also called the {\it$\Dim$-th zigzag barcode}
of $\Fcal$ and is alternatively denoted as
$\Pers_\Dim(\Fcal):=\Pers(\Hm_\Dim(\Fcal))$.
Each persistence interval in $\Pers_\Dim(\Fcal)$
is said to have dimension $\Dim$.
Frequently in this paper, we consider the homology in all dimensions
and take the zigzag module $\Hm_*(\Fcal)$,
for which we have $\Pers_*(\Fcal)=\bigsqcup_{\Dim\geq 0}\Pers_\Dim(\Fcal)$.

Letting $K=\bigunion_{i=0}^\filtcnt K_i$,
we call $K$ the {\it total complex} of $\Fcal$ and
call $\Fcal$ a filtration {\it of} $K$.
% We then induce zigzag modules using relative homology. 
% Specifically,
Using the total complex,
we define the {\it relative filtration} $K/\Fcal$ of $\Fcal$ as
\[K/\Fcal: (K,K_0) \leftrightarrow (K,K_1) \leftrightarrow 
\cdots \leftrightarrow (K,K_\filtcnt).\]
Then, zigzag modules $\Hm_\Dim(K/\Fcal)$ and $\Hm_*(K/\Fcal)$
can be induced using relative homology.
We also alternatively denote the barcodes of the induced modules as 
$\Pers_\Dim(K/\Fcal)$ and $\Pers_*(K/\Fcal)$.
We sometimes call $\Hm_\Dim(K/\Fcal)$, $\Hm_*(K/\Fcal)$ as {\it relative} modules
and $\Pers_\Dim(K/\Fcal)$, $\Pers_*(K/\Fcal)$
as {\it relative} barcodes.
Similarly, 
we also call $\Fcal$, $\Hm_\Dim(\Fcal)$, $\Hm_*(\Fcal)$,
$\Pers_\Dim(\Fcal)$, and $\Pers_*(\Fcal)$
as {\it absolute} filtration, modules, and barcodes respectively.

% A special 
One type of (absolute) filtration
called {\it simplex-wise} filtration
% (or {\it filtration} for short), 
is especially useful,
in which 
each forward (resp.\ backward) inclusion 
is an addition (resp.\ deletion) of a single simplex.
Since each zigzag filtration can be made simplex-wise
by expanding each inclusion into a series of simplex-wise inclusions, 
we do not lose generality by considering only simplex-wise filtrations.
We sometimes denote an inclusion in a simplex-wise filtration
as $K_0\inctosp{\sG} K_1$ (for forward case)
or $K_0\bakinctosp{\sG} K_1$ (for backward case),
so that the simplex being added or deleted is clear.
Hence, a simplex-wise filtration $\Fcal$ can be of the form
$\Fcal:
% \emptyset=
K_0\leftrightarrowsp{\fsimp{}{0}} K_1\leftrightarrowsp{\fsimp{}{1}}
\cdots 
% K_{\filtcnt-1}
\leftrightarrowsp{\fsimp{}{\filtcnt-1}} K_\filtcnt
% =\emptyset
$.

\begin{remark}\label{rmk:inc-bd}
% \sout{From the 
% works~\cite{carlsson2009zigzag-realvalue,dey2021graph,edelsbrunner2010computational,maria2014zigzag},}
An inclusion $K_i\leftrightarrow K_{i+1}$ in a simplex-wise filtration 
either provides $i+1$ as a birth index
or provides $i$ as a death index (but cannot provide both).
\end{remark}

We then define the following type of (absolute) filtration which we focus on:
\begin{definition}[Non-repetitive filtration]
\label{dfn:nonrep-filt}
A zigzag filtration is said to be \emph{non-repetitive} if 
whenever a simplex $\sG$ is deleted from the filtration,
the simplex $\sG$ is never added again.
\end{definition}
\begin{remark}
A well-known type of non-repetitive filtrations
are the (discretized) filtrations for levelset zigzag persistence~\cite{carlsson2009zigzag-realvalue},
which is termed as {\it levelset filtration} in this paper.
\end{remark}

% {\gray
% \begin{definition}[Sign of inclusion]
% \label{dfn:pos-neg}
% Let 
% % $\Fcal:K_0\leftrightarrow K_1\leftrightarrow\cdots\leftrightarrow K_{\filtcnt}$
% % be a simple-wise filtration and 
% $K_i\leftrightarrow K_{i+1}$ be any inclusion in a simple-wise filtration $\Fcal$.
% % $\Fcal$.
% % As evident from
% % works on computation of zigzag persistence,
% The change from $K_i$ to $K_{i+1}$ either introduces 
% $i+1$ as a birth index
% (for which we call the inclusion 
% % $K_i\leftrightarrow K_{i+1}$ 
% as {\it positive})
% % as the start of a persistence interval in $\Pers_*(\Fcal)$ (for which we call $i+1$
% % a {\it positive} index)
% or introduces $i$ as a death index\footnote{The fact can be seen from any works on computation of 
% zigzag persistence~\cite{carlsson2009zigzag-realvalue,dey2021graph,maria2014zigzag}}
% (for which we call the inclusion as {\it negative}).
% % as the end of a persistence interval in $\Pers_*(\Fcal)$ (for which we call $i$
% % a {\it negative} index)
% % We also call the inclusion $K_i\leftrightarrow K_{i+1}$ as {\it positive}
% % (resp. {\it negative}) if it introduces a positive (resp. negative) index.
% We also refer to the
% positivity/negativity of the inclusion
% as its {\it sign}.
% Moreover,
% signs of inclusions in $K/\Fcal$
% are similarly defined.
% \end{definition}
% }

% \section{Relation of absolute and relative zigzag}
% \label{sec:dual}

\section{Duality of absolute and relative zigzag}
% \label{sec:assoc-abs-rel}
\label{sec:duality}

In this section, we 
% first 
present and prove a weak duality theorem %(Theorem~\ref{thm:dual})
between absolute and relative zigzag persistence. This duality
requires that the filtration generating the persistence be non-repetitive.
%We then unveil obstacles preventing the 
%establishment of a similar duality result for zigzag filtrations
%which mat not be non-repetitive.
% The difficulty in establishing duality in the general setting
% indicates that a {\red relative zigzag barcode} itself ought to be 
% a useful topological descriptor, %for the given filtration,
% \tamal{not clear what does the previous phrasing mean}
% for which we then address the computation.
% Specifically, we propose to compute a more general version 
% of relative zigzag persistence, in which both complexes 
% in the {\red relative pairs} may change (see~\cite{dey2020persistence}
% for application of such relative zigzag filtrations).
%A natural question is whether
%similar associations exist for repetitive zigzag filtrations
%as well.
% For repetitive filtrations,
% Appendix~\ref{sec:no-dual} details some obstacles in establishing such a duality.
% in general
%and \tao{leave the association in general as an open problem}.
% \subsection{Duality for non-repetitive filtrations}
% \label{sec:duality}
To establish the duality result %of absolute and relative zigzag
for a non-repetitive, simplex-wise filtration 
$\Fcal:K_0\leftrightarrow
K_1\leftrightarrow
\cdots
% \leftrightarrow K_{\filtcnt-1}
\leftrightarrow K_\filtcnt$,
we first turn $\Fcal$ into a more standard form
as follows:
we attach additions to the beginning of $\Fcal$
and deletions to the end, 
to derive another simplex-wise filtration $\Fcal'$ which
starts and ends with empty complexes.
Note that $\Fcal'$ is also non-repetitive.
Since the absolute and relative barcodes of $\Fcal$
can be easily derived from those of $\Fcal'$,
our duality result focuses on this standard form only:
% \tamal{If we say `duality', then it should be valid `both ways'. The Theorem reads from $\Fcal$ to $K/\Fcal$. Also, the bars correspond in `one way'. Maybe, we should not call it `duality'? Or, we state it both ways?}
\def\mygraphic{\includegraphics[height=5.5pt]{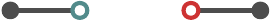}}
\newlength\ccrelwidth
\setlength\ccrelwidth{\widthof{\mygraphic}}
\begin{theorem}[Weak duality of absolute/relative zigzag]
\label{thm:dual}
Let $K$ be a simplicial complex and 
% $\Fcal$ 
% \[
$
\Fcal:\emptyset=K_0\leftrightarrow K_1\leftrightarrow
\cdots
% \leftrightarrow K_{\filtcnt-1}
\leftrightarrow K_\filtcnt=\emptyset
$
% \]
be a simplex-wise filtration of $K$
which is non-repetitive.
% and let $K=\bigunion_{i=0}^\filtcnt K_i$.
Then,
there exists the following surjective correspondence
from intervals in $\Pers_*(\Fcal)$ to intervals in $\Pers_*(K/\Fcal)${\rm:}
\begin{center}
{\rm\begin{tabular}{lllcllc}
    % \toprule
    \midrule
    \multicolumn{3}{c}{$\Pers_*(\Fcal)$} & & \multicolumn{3}{c}{$\Pers_*(K/\Fcal)$} \\ 
    \cmidrule{1-3}\cmidrule{5-7}
    % \hline
    \makecell[c]{Type} & Interval & Dim & & Interval(s) & Dim & \makecell[c]{Type} \\ 
    \midrule 
    % \\[2pt]
    % \rule{0pt}{2em}
    \raisebox{0pt}{\includegraphics[height=5.5pt]{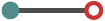}} closed-open &
    % \makecell[l]{closed-open} & 
    $[\birth,\death]$ & $\Dim$ & $\mapsto$ &
    $[\birth,\death]$ & $\Dim+1$ 
    & \raisebox{0pt}{\includegraphics[height=5.5pt]{fig/co}}
    \\\cmidrule{1-7} 
    \raisebox{0pt}{\includegraphics[height=5.5pt]{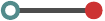}} open-closed &
    % \makecell[l]{open-closed} & 
    $[\birth,\death]$ & $\Dim$ & $\mapsto$ &
    $[\birth,\death]$ & $\Dim+1$ 
    & \raisebox{0pt}{\includegraphics[height=5.5pt]{fig/oc}}
    \\\cmidrule{1-7} 
    \raisebox{0pt}{\includegraphics[height=5.5pt]{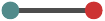}} closed-closed &
    % closed-closed & 
    $[\birth,\death]$ & $\Dim$ & $\mapsto$ &
    $[0,\birth-1]$, $[\death+1,\filtcnt]$ & $\Dim$ 
    & \raisebox{0pt}{\includegraphics[height=5.5pt]{fig/cc-rel}}
    \\\cmidrule{1-7} 
    \raisebox{0pt}{\includegraphics[height=5.5pt]{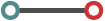}} open-open &
    % open-open &
    $[\birth,\death]$ & $\Dim$ & $\mapsto$ &
    $[0,\death]$, $[\birth,\filtcnt]$ & $\Dim+1$ 
    & \raisebox{-2.5pt}{\includegraphics[width=\ccrelwidth]{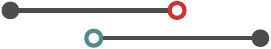}}
    \\
    % \bottomrule
    \midrule
\end{tabular}}
\end{center}
% Note that when we
By saying that the correspondence is surjective,
we mean that every interval in $\Pers_*(K/\Fcal)$
has a correspondence in $\Pers_*(\Fcal)$ according to the above rules.
% \begin{enumerate}
%     \item Any closed-open or open-closed interval $[\birth,\death]$ in $\Pers_\Dim(\Fcal)$
%     corresponds to the same interval $[\birth,\death]$ in $\Pers_{\Dim+1}(K/\Fcal)$.
%     \item Any closed-closed interval $[\birth,\death]$ in $\Pers_\Dim(\Fcal)$
%     corresponds to two intervals $[0,\birth-1]$, $[\death+1,\filtcnt]$
%     in $\Pers_\Dim(K/\Fcal)$.
%     \item Any open-open interval $[\birth,\death]$ in $\Pers_\Dim(\Fcal)$
%     corresponds to two intervals $[0,\death]$, $[\birth,\filtcnt]$
%     in $\Pers_{\Dim+1}(K/\Fcal)$.
% \end{enumerate}
\end{theorem}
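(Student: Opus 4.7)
My plan is to reduce the theorem to the case of up-down filtrations via a sequence of Mayer-Vietoris diamond switches~\cite{carlsson2010zigzag}, and then establish the correspondence directly for up-down filtrations.

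For the reduction, I would start from the observation that the non-repetitive hypothesis forces any consecutive pair $K_{i-1}\bakinctosp{\sG}K_i\inctosp{\tG}K_{i+1}$ to have $\sG \ne \tG$: indeed, $\sG$ has just been deleted and the non-repetitive condition forbids it from being re-added as $\tG$. This disjointness is exactly what is required to apply the Mayer-Vietoris diamond and replace the subfiltration with $K_{i-1}\inctosp{\tG}(K_{i-1}\union\{\tG\})\bakinctosp{\sG}K_{i+1}$. I would then check that such a switch preserves non-repetitiveness, so iterating it pushes every deletion past every insertion and yields an up-down filtration $\Fcal^{UD}$ with the same empty endpoints and the same total complex $K$. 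For each switch, I would track how both $\Pers_*(\Fcal)$ and $\Pers_*(K/\Fcal)$ transform. The absolute side uses the standard interval transformation rules of~\cite{carlsson2010zigzag}; the relative side follows by applying the diamond principle to the induced zigzag $(K, K_\bullet)$ of simplicial pairs, noting that the switch on $K_\bullet$ induces a corresponding switch on $(K, K_\bullet)$ while $K$ remains fixed. A case analysis by interval type then shows that the claimed surjective correspondence between the two barcodes is invariant under each individual switch.

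The remaining step is to verify the correspondence on an up-down filtration $\Fcal^{UD}: \emptyset = L_0 \incto \cdots \incto L_\ell = K \bakincto \cdots \bakincto L_m = \emptyset$. Here, structural constraints eliminate open-open intervals from both sides, since such an interval would require a birth index $>\ell$ together with a death index $<\ell$. The intervals of the remaining three types in $\Pers_*(\Fcal^{UD})$ are completely described by the standard persistence of the ascending half, the descending half, and a matching of their essential $\Hm_*(K)$-classes. The relative module $\Hm_*(K/\Fcal^{UD})$ passes through $\Hm_*(K, K) = 0$ in the middle, which forces every relative interval to lie entirely within a single half, while the endpoint spaces $\Hm_*(K, \emptyset) = \Hm_*(K)$ contribute intervals anchored at $0$ or at $m$. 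Using the long exact sequence of each pair $(K, L_i)$, I would translate ephemeral absolute classes into dimension-shifted relative intervals (the closed-open and open-closed cases), and match essential absolute classes to the split pair $[0, \birth - 1]$, $[\death + 1, \filtcnt]$ by tracking when the essential class lies inside the image of $\Hm_\Dim(L_i) \to \Hm_\Dim(K)$.

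The main obstacle will be verifying the closed-closed and open-open cases during the reduction, where one absolute interval corresponds to two relative intervals. A diamond switch can merge or split intervals on each side in non-trivial ways, and I must ensure that the two relative intervals associated with a split absolute interval are always produced with the precise endpoints $[0, \birth - 1]$, $[\death + 1, \filtcnt]$ in the closed-closed case and $[0, \death]$, $[\birth, \filtcnt]$ in the open-open case. Preserving surjectivity compounds this difficulty: I must rule out stray relative intervals appearing without an absolute counterpart after successive switches, which requires simultaneously tracking both barcodes rather than just one.
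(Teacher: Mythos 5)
Your proposal mirrors the paper's proof exactly in its overall structure: reduce to an up-down filtration by Mayer-Vietoris diamond switches (noting that non-repetitiveness guarantees $\sG\ne\tG$ for each delete-then-add pair, which is the paper's Proposition~\ref{prop:norep-filt-UD}), verify the correspondence on the up-down base case by splitting at the apex $K$ into two monotone halves, and then propagate via the Diamond Principle applied simultaneously to the absolute filtration and to the induced filtration of pairs $(K,K_\bullet)$. The only variation is in how you handle the up-down base case: you sketch re-deriving the non-zigzag absolute/relative duality from long exact sequences of $(K,L_i)$, whereas the paper simply invokes the duality of de Silva et al.~\cite{de2011dualities} for each monotone half, so your route is self-contained but not materially different.
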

% {\rm\begin{tabular}{lllcll}
%     % \toprule
%     \midrule
%     \multicolumn{3}{c}{$\Pers_*(\Fcal)$} & & \multicolumn{2}{c}{$\Pers_*(K/\Fcal)$} \\ 
%     \cmidrule{1-3}\cmidrule{5-6}
%     % \hline
%     \makecell[c]{Type} & Interval & Dim & & Interval(s) & Dim \\ 
%     \midrule 
%     % \\[2pt]
%     % \rule{0pt}{2em}
%     \makecell[l]{closed-open,\\open-closed} & $[\birth,\death]$ & $\Dim$ & $\mapsto$ &
%     $[\birth,\death]$ & $\Dim+1$ \\\cmidrule{1-6} 
%     closed-closed & $[\birth,\death]$ & $\Dim$ & $\mapsto$ &
%     $[0,\birth-1]$, $[\death+1,\filtcnt]$ & $\Dim$ \\\cmidrule{1-6} 
%     open-open & $[\birth,\death]$ & $\Dim$ & $\mapsto$ &
%     $[0,\death]$, $[\birth,\filtcnt]$ & $\Dim+1$ \\
%     % \bottomrule
%     \midrule
% \end{tabular}}
\begin{remark}
% Note that
Closed-closed intervals in $\Pers_*(\Fcal)$
correspond to intervals of the same dimension in $\Pers_*(K/\Fcal)$,
while for other types of intervals in $\Pers_*(\Fcal)$,
there is a dimension shift for the 
corresponding intervals in $\Pers_*(K/\Fcal)$.
Also,
% note that 
each closed-closed or open-open interval in $\Pers_*(\Fcal)$
corresponds to \emph{two intervals} in $\Pers_*(K/\Fcal)$.
% The second correspondence rule in Theorem~\ref{thm:dual} preserves the dimension
% of the intervals while the first and third rules shift the dimension
% of the intervals up by one.
%\sout{In the theorem, 
%we do not explicitly list the types (open/closed) of the intervals 
%in $\Pers_*(K/\Fcal)$, which can be determined by inspecting the inclusion directions
%in the filtration. For example, for a closed-closed interval $[b,d]\in\Pers_*(\Fcal)$,
% one should note that 
%one corresponding interval in the relative module
%is closed-open ($b-1$ now is an open death)
% $[0,b-1]\in\Pers_*(K/\Fcal)$ now has an {\it open}
% death $b-1$ 
%and the other is open-closed ($d+1$ is an open birth).}
% the corresponding interval $[d+1,\filtcnt]\in\Pers_*(K/\Fcal)$ has an {\it open}
% birth $d+1$.
\end{remark}

A (strong) duality result (such as the one in~\cite{de2011dualities})
should be `bijective' in a sense that 
one is able to recover each side from the other.
However, this is not the case for the duality presented in Theorem~\ref{thm:dual}.
While one can derive $\Pers_*(K/\Fcal)$ from $\Pers_*(\Fcal)$,
there is an obstacle for recovering $\Pers_*(\Fcal)$ from $\Pers_*(K/\Fcal)$.
That is, in order to have the closed-closed and open-open intervals in $\Pers_*(\Fcal)$, 
one has to properly pair an interval in $\Pers_*(K/\Fcal)$
starting with 0 to an interval in $\Pers_*(K/\Fcal)$ ending with $\filtcnt$.
% and 
There is no obvious way to do so 
without knowing the {\it representatives} for these intervals~\cite{maria2014zigzag}.
That is the reason why we call this duality {\it weak}.
However, we show in Section~\ref{sec:Lefschetz} that
for certain intervals of $\Pers_*(K/\Fcal)$
when $K$ is a manifold,
the `reversed' pairing is indeed feasible without
the representatives.

\paragraph{Example.}

\begin{figure} 
    \begin{subfigure}{0.33\textwidth}
    \centering
    \includegraphics[width=0.85\textwidth]{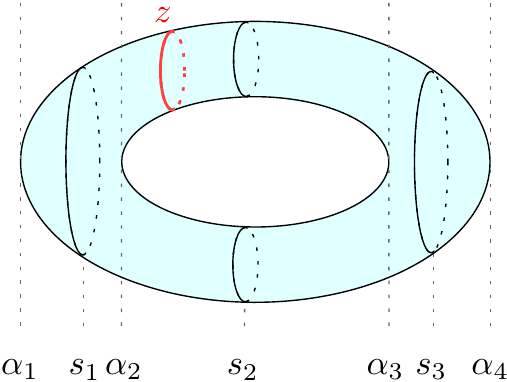} 
    \vspace{.5em}
    \caption{The torus $K$ with the height function $h$ taken over the horizontal line.}
    \label{fig:torus}
    \end{subfigure}
    \begin{subfigure}{0.33\textwidth}
    \centering
    \includegraphics[width=0.85\textwidth]{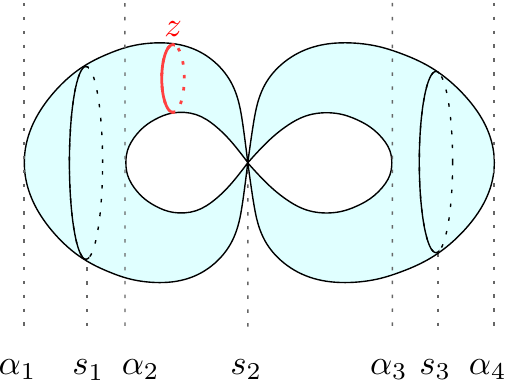} 
    \vspace{.5em}
    \caption{The space $|K|/|K_4|$.}
    \label{fig:contracted}
    \end{subfigure}
    \begin{subfigure}{0.33\textwidth}
    \centering
    % \begin{center}
    \begin{tabular}{lcl}
        % \toprule
        % $\Pers_*(\Fcal)$ & & $\Pers_*(K/\Fcal)$ \\ 
        \makecell{$\Fcal$} & & \makecell{$K/\Fcal$} \\ 
        % \cmidrule{1-3}\cmidrule{5-6}
        % \hline
        % \makecell[c]{Type} & Interval & Dim & & Interval(s) & Dim \\ 
        \midrule 
        % \\[2pt]
        % \rule{0pt}{2em}
        $[1,7]^\text{cc}_0$ & $\mapsto$ & $[0,0]_0,[8,8]_0$ \\
        $[4,4]^\text{oo}_0$ & $\mapsto$ & $[0,4]_1,[4,8]_1$ \\
        $[2,6]^\text{oo}_1$ & $\mapsto$ & $[0,6]_2,[2,8]_2$ \\
        $[3,5]^\text{cc}_1$ & $\mapsto$ & $[0,2]_1,[6,8]_1$ \\
        % \cmidrule{1-6} 
        % \bottomrule
    \end{tabular}
    % \end{center}
    \vspace{1em}
    \caption{Interval mapping from $\Pers_*(\Fcal)$ to $\Pers_*(K/\Fcal)$.}
    \label{fig:ex-map}
    \end{subfigure}

    \begin{subfigure}{\textwidth}
    \centering
    \vspace{2em}
    \includegraphics[width=0.72\textwidth]{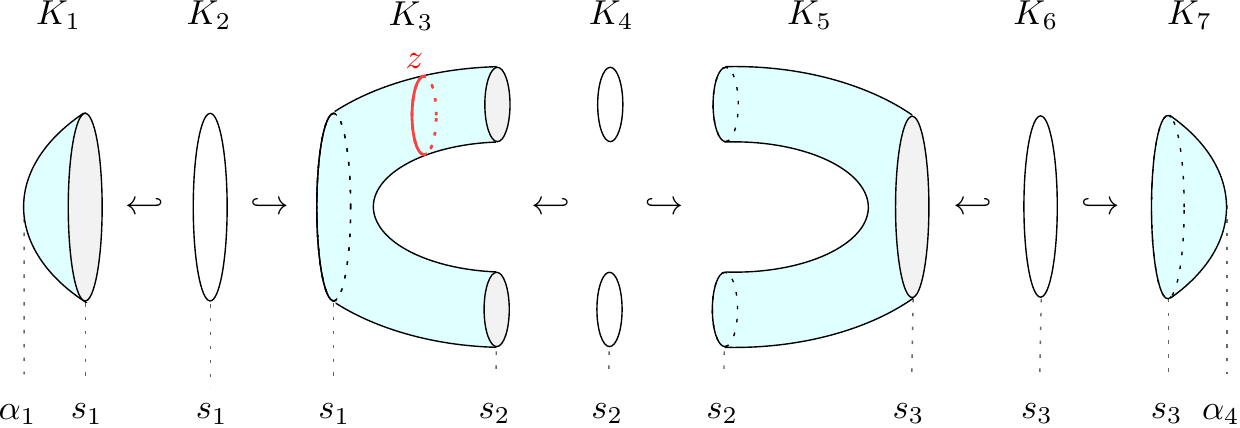} 
    \vspace{.5em}
    \caption{The levelset filtration $\Fcal$ of $h$ which also contains $K_0=\emptyset$ and $K_8=\emptyset$ at the two ends.}
    \label{fig:filt}
    \end{subfigure}

\caption{An example for the weak duality of absolute and relative zigzag.}
\label{fig:dual-ex} 
\end{figure}

% \begin{centering}
% \begin{tabular}{lllcll}
%     \toprule
%     \multicolumn{3}{c}{$\Pers_*(\Fcal)$} & & \multicolumn{2}{c}{$\Pers_*(K/\Fcal)$} \\ 
%     \cmidrule{1-3}\cmidrule{5-6}
%     % \hline
%     \makecell[c]{Type} & Interval & Dim & & Interval(s) & Dim \\ 
%     \midrule 
%     % \\[2pt]
%     % \rule{0pt}{2em}
%     closed-closed & $[1,7]$ & 0 & $\mapsto$ & $[0,0],[8,8]$ & 0 \\
%     open-open & $[4,4]$ & 0 & $\mapsto$ & $[0,4],[4,8]$ & 1 \\
%     open-open & $[2,6]$ & 1 & $\mapsto$ & $[0,6],[2,8]$ & 2 \\
%     closed-closed & $[3,5]$ & 1 & $\mapsto$ & $[0,2],[6,8]$ & 1 \\
%     % \cmidrule{1-6} 
%     \bottomrule
% \end{tabular}
% \end{centering}

In Figure~\ref{fig:dual-ex}, we provide an example for the duality described above,
where we take the height function $h$ on a torus $K$ (Figure~\ref{fig:torus})
and consider the levelset filtration $\Fcal$~\cite{carlsson2009zigzag-realvalue} 
of $h$ (Figure~\ref{fig:filt}).
We also assume that the torus is finely triangulated.
Let $-\infty=\aG_0<\aG_1<\cdots<\aG_4<\aG_5=\infty$ be the critical values of $h$
and $s_i$'s be the regular values s.t.\ $\aG_0<s_0<\aG_1<\cdots<\aG_4<s_4<\aG_5$.
Then,
the levelset filtration $\Fcal$ is 
defined with $h\inv(s_i)$ and $h\inv[s_i,s_{i+1}]$ for the regular values (see~\cite{carlsson2009zigzag-realvalue}).
The mapping of intervals from $\Pers_*(\Fcal)$ to $\Pers_*(K/\Fcal)$
as in Theorem~\ref{thm:dual} is listed in Figure~\ref{fig:ex-map}.
In Figure~\ref{fig:ex-map},
a superscript `cc' (resp.\ `oo') indicates that the interval
is closed-closed (resp.\ open-open)
and the subscripts denote the dimension of the intervals.

The reason for $[3,5]$ to form a closed-closed interval in $\Pers_1(\Fcal)$
is that the 1-cycle $z$ is created in $K_3$
and its homology class continues to exists in $K_4$ and $K_5$.
% \tamal{To see how does it correspond to the classes in the
% relative version, } 
To see how a homology class gets born and dies with the
corresponding relative intervals,
observe that $[z]$ is non-trivial in $\Hm_1(K,K_0)=\Hm_1(K)$.
Also verify that $[z]$ is non-trivial in
$\Hm_1(K,K_1)$ and $\Hm_1(K,K_2)$.
However, $[z]$ becomes trivial in $\Hm_1(K,K_3)$, 
$\Hm_1(K,K_4)$, and $\Hm_1(K,K_5)$.
A quick way to see this is to utilize the fact that $\Hm_*(K,K_i)\simeq\Hmr_*(|K|/|K_i|)$
for the {\it good pair} $(K,K_i)$~\cite{hatcher2002algebraic}.
% \tamal{$K/K_i$ indicates 
% a quotient space, shouldnt' this be $|K|/|K_i|$?}
For example, we show $|K|/|K_4|$ in Figure~\ref{fig:contracted},
in which $z$ is a trivial 1-cycle.
Therefore, $[0,2]$ (and similarly $[6,8]$) forms an interval
in $\Pers_1(K/\Fcal)$ represented by $[z]$.

We can also look at 
% the correspondence of 
the open-open
interval $[2,6]\in\Pers_1(\Fcal)$,
which is represented by the 1-cycle in $K_2$ and the 
homologous cycles in the remaining complexes.
Note that this interval $[2,6]$ is actually associated with the only non-trivial class
in $\Hm_2(K)$.
(Intuitively, one can imagine sweeping the 1-cycle in $K_2$ over
the horizontal line with proper forking and joining,
which would then obtain the entire surface $K$;
for more insight into this, we recommend the work~\cite{carlsson2009zigzag-realvalue} and~\cite{dey2021computing-levelsetcyc}.)
Describing
representatives for the corresponding intervals 
$[0,6]$ and $[2,8]$ in the relative module %in $\Pers_2(K/\Fcal)$ 
is not as straightforward. 
However,
% one evidence for 
the fact that 
these two form persistence intervals in $\Pers_2(K/\Fcal)$ 
is consistent with the following observation:
% is the following:
the 2nd Betti number $\bG_2$ of the pairs $(K,K_0)$, $(K,K_1)$,
$(K,K_7)$, and $(K,K_8)$ is 1, whereas $\bG_2$
% the 2nd Betti numbers 
of the remaining pairs from $(K,K_2)$ to $(K,K_6)$
is 2 (e.g., there are two 2-cycles in $|K|/|K_4|$ in Figure~\ref{fig:contracted}).
Explicitly listing representatives for the two intervals
is beyond the scope of this paper and one can refer to the work~\cite{maria2014zigzag}
and~\cite{dey2021computing-levelsetcyc}
for details.
% This is consistent with the two intervals in $\Pers_2(K/\Fcal)$.

\subsection{Justification}

To prove Theorem~\ref{thm:dual},
we draw upon the Diamond Principle proposed by Carlsson and de Silva~\cite{carlsson2010zigzag}
(see also~\cite{carlsson2009zigzag-realvalue}),
which relates the barcodes of two filtrations
differing by a local change. 
% For illustration purposes, we present the Diamond Principle
% in the relative homology form, 
% Specifically,
We first provide the following definition:
\begin{definition}[Mayer-Vietoris diamond~\cite{carlsson2010zigzag,carlsson2009zigzag-realvalue}]
\label{dfn:diamond}
Two simplex-wise filtrations $\Fcal$ and $\Fcal'$ 
are related by a \emph{Mayer-Vietoris diamond}
if they are of the following forms (where $\splx\neq\tG$):
\begin{equation}
\label{eqn:diamond}
\begin{tikzcd}[column sep=1.6em,
  row sep=0.4em,
%   /tikz/column 1/.append style={anchor=base west}
]
\Fcal: &[-2em] & & & K_j\arrow[rd,hookleftarrow,pos=0.4,"\tG"]
% K_{j-1}\union K_{j+1}
\\
& K_0\arrow[r,leftrightarrow] & 
  \cdots\arrow[r,leftrightarrow] & 
  K_{j-1}\arrow[ur,hookrightarrow,pos=0.65,"\splx"]
  \arrow[dr,hookleftarrow,pos=0.4,"\tG"] 
  & & 
  K_{j+1}
%   \arrow[ul,hookrightarrow,pos=0.6,"\tG",swap]
%   \arrow[dl,hookleftarrow,pos=0.7,"\splx",swap]
  \arrow[r,leftrightarrow] & 
  \cdots\arrow[r,leftrightarrow] & 
  K_\filtcnt\\
\Fcal': & & & & K'_j\arrow[ru,hookrightarrow,pos=0.65,"\splx"]
%\arrow[lu,hookrigharrow,pos=0.7,"\tG",swap]
% K_{j-1}\intsec K_{j+1}
\\
\end{tikzcd}
\end{equation}
In the above diagram,
$\Fcal$ and $\Fcal'$ differ only in the complexes at index $j$
and $\Fcal'$ is derived from $\Fcal$ by switching the addition of $\splx$ 
and deletion of $\tG$.
We also say that $\Fcal'$ is derived from $\Fcal$ by an \emph{outward} switch
and $\Fcal$ is derived from $\Fcal'$ by an \emph{inward} switch.
\end{definition}

\begin{remark}\label{rmk:diamond-invalid}
Note that the diagram in Equation~(\ref{eqn:diamond})
is invalid when $\splx=\tG$.
To see this, suppose that the two simplices equal.
% $\splx=\tG$. 
Then,
the fact that $\tG=\splx$ is deleted from $K_{j-1}$ in $\Fcal'$ implies
that $\splx\in K_{j-1}$.
This makes $\Fcal$ invalid because we cannot add $\splx$ to $K_{j-1}$ anymore.
\end{remark}

\begin{remark}
% In Definition~\ref{dfn:diamond}, 
In Equation~(\ref{eqn:diamond}),
we only provide a specific form of Mayer-Vietoris diamond
which is sufficient for our purposes;
see~\cite{carlsson2010zigzag,carlsson2009zigzag-realvalue} for a more general form.
% \end{remark}
% \begin{remark}
% To see that 
According to~\cite{carlsson2010zigzag},
the diamond in Equation~(\ref{eqn:diamond})
is a Mayer-Vietoris diamond 
% It can be verified that 
%follows from the fact that
because $K_j=\linebreak[1]K_{j-1}\union K_{j+1}$
and $K'_j=K_{j-1}\intsec K_{j+1}$.
\end{remark}

We then have the following fact:

\begin{theorem}[Diamond Principle~\cite{carlsson2010zigzag}]
\label{thm:diamond}
Given two simplex-wise filtrations $\Fcal,\Fcal'$ 
related by a Mayer-Vietoris diamond
as in Equation~(\ref{eqn:diamond}),
there is a bijection from $\Pers_*(\Fcal)$ to $\Pers_*(\Fcal')$
as follows{\rm:}

% \vspace{1em}
\begin{center}
\begin{tabular}{lll}
    %  \toprule
     \midrule
     \makecell{$\Pers_*(\Fcal)$} & & \makecell{$\Pers_*(\Fcal')$} \\
     \midrule
     $[b,j-1]${\rm;} $b\leq j-1$ & $\mapsto$ & $[b,j]$ \\
     $[b,j]${\rm;} $b\leq j-1$ & $\mapsto$ & $[b,j-1]$ \\
     $[j,d]${\rm;} $d\geq j+1$ & $\mapsto$ & $[j+1,d]$ \\
     $[j+1,d]${\rm;} $d\geq j+1$ & $\mapsto$ & $[j,d]$ \\
     $[j,j]$ {\rm of dimension $\Dim$}  & $\mapsto$ & $[j,j]$ {\rm of dimension $\Dim-1$} \\
     $[b,d]${\rm; all other cases} & $\mapsto$ & $[b,d]$ \\
    %  \bottomrule
     \midrule
\end{tabular}
\end{center}
% \tamal{What about the types? It would be nice again to spell out the types.}
Note that
the bijection preserves the dimension of the intervals
except for $[j,j]$.
% \begin{itemize}
%     \item An interval $[b,j]\in \Pers_\Dim(\Ecal)$ is mapped to 
%     $[b,j-1]\in\Pers_\Dim(\Fcal)$ for $b\leq j-1$. 
%     \item An interval $[b,j-1]\in \Pers_\Dim(\Ecal)$ is mapped to 
%     $[b,j]\in\Pers_\Dim(\Fcal)$ for $b\leq j-1$. 
%     \item An interval $[j,d]\in \Pers_\Dim(\Ecal)$ is mapped to 
%     $[j+1,d]\in \Pers_\Dim(\Fcal)$ for $d\geq j+1$.
%     \item An interval $[j+1,d]\in \Pers_\Dim(\Ecal)$ is mapped to 
%     $[j,d]\in \Pers_\Dim(\Fcal)$ for $d\geq j+1$.
%     \item An interval $[j,j]\in \Pers_\Dim(\Ecal)$ is mapped to
%     $[j,j]\in \Pers_{\Dim-1}(\Fcal)$.
%     \item An interval $[b,d]\in \Pers_\Dim(\Ecal)$ is mapped to 
%     $[b,d]\in \Pers_\Dim(\Fcal)$ for all the other cases. 
% \end{itemize}
\end{theorem}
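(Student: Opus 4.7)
The plan is to prove the Diamond Principle by localizing the change to the three-term subzigzag at index $j$ and invoking the Mayer--Vietoris long exact sequence coming from $K_j = K_{j-1} \cup K_{j+1}$ and $K'_j = K_{j-1} \cap K_{j+1}$. Since $\Fcal$ and $\Fcal'$ agree at every index other than $j$, the behavior of any interval summand $\Ical^{[b,d]}$ is determined locally by the three adjacent homologies, so the proof amounts to showing how each of the six cases in the table arises from the structure of this local square.

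First I would fix a decomposition $\Hm_*(\Fcal) \simeq \bigoplus_k \Ical^{[b_k, d_k]}$ and label each summand by its interaction with $\{j-1, j, j+1\}$. Write $\alpha\colon \Hm_\Dim(K_{j-1}) \to \Hm_\Dim(K_j)$, $\beta\colon \Hm_\Dim(K_{j+1}) \to \Hm_\Dim(K_j)$ for the maps in $\Fcal$, and $\alpha'\colon \Hm_\Dim(K'_j) \to \Hm_\Dim(K_{j-1})$, $\beta'\colon \Hm_\Dim(K'_j) \to \Hm_\Dim(K_{j+1})$ for $\Fcal'$. The Mayer--Vietoris sequence
\[
\cdots \to \Hm_\Dim(K'_j) \xrightarrow{(\alpha',\beta')} \Hm_\Dim(K_{j-1}) \oplus \Hm_\Dim(K_{j+1}) \xrightarrow{\alpha - \beta} \Hm_\Dim(K_j) \xrightarrow{\partial} \Hm_{\Dim-1}(K'_j) \to \cdots
\]
governs all translations between intervals. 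For instance, an interval ending at $j-1$ in $\Fcal$ corresponds to a class in $\ker \alpha$; by exactness this class lifts to $\Hm_\Dim(K'_j)$ and maps trivially to $\Hm_\Dim(K_{j+1})$, thereby producing an interval ending at $j$ in $\Fcal'$. Symmetric diagram chases through the remaining positions of the sequence handle the rules $[b,j] \mapsto [b,j-1]$, $[j,d] \mapsto [j+1,d]$, and $[j+1,d] \mapsto [j,d]$. Intervals passing straight through $\{j-1, j, j+1\}$ fall under the sixth rule, where the two compatible classes in $\Hm_\Dim(K_{j-1})$ and $\Hm_\Dim(K_{j+1})$ lift by the pullback property to a unique class in $\Hm_\Dim(K'_j)$, preserving the interval $[b,d]$ verbatim.

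The most delicate case, which I expect to be the main obstacle, is the degenerate $[j, j]$ interval with its dimension shift. Such a summand of $\Hm_\Dim(K_j)$ lies in $\coker(\alpha - \beta)$, and by exactness this cokernel injects via the connecting map $\partial$ into $\Hm_{\Dim-1}(K'_j)$ with image equal to $\ker(\alpha', \beta')$ in dimension $\Dim - 1$; the latter kernel is precisely the space of classes whose images in both $\Hm_{\Dim-1}(K_{j-1})$ and $\Hm_{\Dim-1}(K_{j+1})$ vanish, i.e., a $[j,j]$-summand of dimension $\Dim - 1$ in $\Hm_*(\Fcal')$. Carefully verifying that $\partial$ restricts to a bijection between these summand spaces, while confirming that no other summand is disturbed, is the key technical step. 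Bijectivity of the full table then follows by symmetry, since running the same argument with the roles of $\Fcal$ and $\Fcal'$ swapped (an inward rather than outward switch) inverts the mapping.
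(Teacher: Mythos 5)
This theorem is not proved in the paper; it is quoted from Carlsson and de Silva~\cite{carlsson2010zigzag}, where the argument relies on the reflection-functor machinery from quiver representation theory (a later, more self-contained proof appears in Carlsson, de Silva, and Morozov). There is therefore no proof in the present paper to compare your sketch against, so I can only assess the sketch on its own merits.

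Your Mayer--Vietoris plan captures the correct local picture: the square $K'_j = K_{j-1} \cap K_{j+1}$, $K_j = K_{j-1} \cup K_{j+1}$ does fit into an exact sequence, the connecting map $\partial$ is indeed what produces the dimension shift for $[j,j]$, and the directions of lifts and images match the table. But as a proof the sketch has a real gap. You ``fix a decomposition'' of $\Hm_*(\Fcal)$ and then chase each summand through the MV sequence, but an interval summand is a \emph{subrepresentation}, i.e., a coherent family of subspaces $W_i \subseteq \Hm_*(K_i)$ closed under the structure maps. Replacing $\Hm_*(K_j)$ by $\Hm_*(K'_j)$ and flipping the two adjacent arrows does not automatically turn the family $\{W_i\}_{i\neq j}$ into a subrepresentation of $\Hm_*(\Fcal')$: you must exhibit, for every summand simultaneously, a choice of $W'_j \subseteq \Hm_*(K'_j)$ so that the resulting subspaces are still invariant under the new maps at $j-1 \leftarrow j \rightarrow j+1$, are pairwise in direct sum, and jointly span $\Hm_*(K'_j)$. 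An elementwise MV chase produces \emph{a} preimage for each class you care about, but preimages are not unique and nothing in the chase guarantees the chosen preimages for different summands can be made linearly independent or that they exhaust $\Hm_*(K'_j)$. This coherence problem is precisely what makes the Diamond Principle nontrivial, and it is the step your sketch defers to ``carefully verifying that $\partial$ restricts to a bijection'' without indicating how. One would either need the reflection-functor argument, or a rank-based argument (compute the multiplicities of intervals from ranks of compositions and match them on both sides), or the explicit basis construction of Carlsson--de Silva--Morozov; as written the sketch asserts the conclusion of the hard step rather than proving it.
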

\begin{remark}\label{rmk:diamond}
In the above bijection, only an interval containing {\it some but not all}
of $\Set{j-1,j,j+1}$ maps to a different interval
or different dimension.
\end{remark}

We also observe the following fact which is critical for the proof
of Theorem~\ref{thm:dual}:

\begin{proposition}\label{prop:norep-filt-UD}
% Let a simplex-wise filtration 
% $
% % \[
% \Fcal:\emptyset=K_0\leftrightarrow K_1\leftrightarrow
% \cdots
% % \leftrightarrow K_{\filtcnt-1}
% \leftrightarrow K_\filtcnt=\emptyset
% % \]
% $
% of a complex $K$ be non-repetitive.
% with $K=\bigunion_{i=0}^\filtcnt K_i$.
Let $K$ be a simplicial complex and 
% $\Fcal$ 
% \[
$
\Fcal:\emptyset=K_0\leftrightarrow K_1\leftrightarrow
\cdots
% \leftrightarrow K_{\filtcnt-1}
\leftrightarrow K_\filtcnt=\emptyset
$
% \]
be a simplex-wise filtration of $K$
which is non-repetitive.
Then, there is a simplex-wise {\rm up-down} filtration
\[\ud:\emptyset=L_0\hookrightarrow L_1\hookrightarrow
\cdots\hookrightarrow 
L_{\simpcnt}=K\hookleftarrow
L_{\simpcnt+1}\hookleftarrow
\cdots
% \hookleftarrow L_{2\ell-1}
\hookleftarrow L_{2\simpcnt}=\emptyset\]
where $\filtcnt=2\simpcnt$
s.t.\ 
$\Fcal$ is derived from $\ud$ by a sequence of outward switches.
\end{proposition}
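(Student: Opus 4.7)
The plan is a bubble-sort style argument: I will transform $\Fcal$ into an up-down filtration $\ud$ by a sequence of \emph{inward} switches (the reverse of outward switches, cf.\ Definition~\ref{dfn:diamond}); reading this sequence in reverse then realizes $\Fcal$ from $\ud$ via outward switches, which is what the proposition demands. Because $\Fcal$ is non-repetitive with both endpoints empty, every simplex of $K$ is added exactly once and deleted exactly once, giving $\filtcnt=2\simpcnt$ with $\simpcnt=|K|$, and the middle complex of the terminal up-down filtration is automatically forced to equal $K$.

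The induction step is the only substantive point. Suppose $\Fcal$ is not already up-down; I claim there must exist an index $j$ at which $K_{j-1}\bakincto K_j$ is a deletion of some $\tau$ while $K_j\incto K_{j+1}$ is an addition of some $\sigma$. Otherwise every step following the first deletion would itself be a deletion, forcing $\Fcal$ to be up-down. At such a $j$ I apply an inward switch, introducing a new middle complex $K'_j := K_{j-1}\cup\{\sigma\}$ and replacing the pair (del $\tau$)(add $\sigma$) by (add $\sigma$)(del $\tau$). The switch is well-defined precisely because $\sigma\neq\tau$: non-repetitiveness rules out re-adding $\tau$ at step $j+1$ after its deletion at step $j$. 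Moreover, $K'_j$ is a bona fide simplicial complex, since every proper face of $\sigma$ already lies in $K_j$ (as $K_{j+1}=K_j\cup\{\sigma\}$ is a complex) and hence in $K_{j-1}\supset K_j$; and the identity $K_{j+1}=K'_j\setminus\{\tau\}=(K_{j-1}\setminus\{\tau\})\cup\{\sigma\}=K_j\cup\{\sigma\}$ shows that the new piece glues consistently to the rest of the filtration. Non-repetitiveness is preserved by the switch because only the positions of the add-event of $\sigma$ and the delete-event of $\tau$ have shifted (each by one step), while the relative ordering of these two events with all other events is unaffected.

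Termination is established by the potential $\Phi := \sum_{i=1}^{\simpcnt} a_i$, where $a_1<a_2<\cdots<a_\simpcnt$ are the indices of the addition steps. Each inward switch strictly decreases exactly one $a_i$ by $1$ and leaves the others fixed, so $\Phi$ is a strictly decreasing nonnegative integer and the procedure must halt. At termination, no del-add adjacency remains, so by the observation of the second paragraph the filtration is up-down; this is the desired $\ud$, and reversing the recorded sequence of inward switches exhibits $\Fcal$ as its image under a sequence of outward switches. The one subtle place in the argument is the diamond validity check, and this is precisely where non-repetitiveness is indispensable: it supplies the inequality $\sigma\neq\tau$ that enables each individual switch, and is preserved by each switch so that the induction can run to completion.
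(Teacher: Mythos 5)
Your proof is correct and follows essentially the same route as the paper: both realize $\ud$ from $\Fcal$ by a sequence of inward switches (reading the sequence in reverse gives the claimed outward switches), and both invoke non-repetitiveness precisely to rule out $\sigma=\tau$ in each switch. The only cosmetic differences are that the paper bubbles the first post-deletion addition systematically backward and argues termination via the growing up-down prefix, whereas you pick an arbitrary del-add adjacency and terminate via the potential $\sum a_i$; you are also more explicit in checking that $K'_j$ is a simplicial complex and that non-repetitiveness is preserved, details the paper leaves implicit.
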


\begin{remark}
% {\red Connection to Section~\ref{sec:comput-no-rept}.}
In Section~\ref{sec:comput-no-rept},
we further show
how an up-down filtration can be converted into a {\it non-zigzag} filtration
and hence relate the barcode computation for non-repetitive filtrations
to the computation of standard persistence.
\end{remark}

\begin{proof}
We prove an equivalent statement, i.e.,
$\ud$ is derived from $\Fcal$
by a sequence of inward switches.
Suppose that $\Fcal$ is of the form
$
% \[
\Fcal:
% \emptyset=
K_0\leftrightarrowsp{\fsimp{}{0}} K_1\leftrightarrowsp{\fsimp{}{1}}
\cdots 
% K_{\filtcnt-1}
\leftrightarrowsp{\fsimp{}{\filtcnt-1}} K_\filtcnt
% =\emptyset
% \]
$.
% Since $K_0=\emptyset$, $K_0\inctosp{\fsimp{}{0}} K_1$ must be an addition.
Let $K_{i}\bakinctosp{\fsimp{}{i}}K_{i+1}$ be the first deletion in $\Fcal$
and $K_{j}\inctosp{\fsimp{}{j}}K_{j+1}$ be the first addition after that.
That is, $\Fcal$ is of the form
% \[
% \Fcal:K_0\inctosp{\fsimp{}{0}} \cdots\inctosp{\fsimp{}{i-1}}
% K_{i}\bakinctosp{\fsimp{}{i}} K_{i+1}\bakinctosp{\fsimp{}{i+1}}\cdots
% \bakinctosp{\fsimp{}{j-2}}
% K_{j-1}\bakinctosp{\fsimp{}{j-1}}
% K_{j}\inctosp{\fsimp{}{j}}K_{j+1}
% \leftrightarrowsp{\fsimp{}{j+1}}\cdots \leftrightarrowsp{\fsimp{}{\filtcnt-1}} K_\filtcnt
% \]
\[
\Fcal:K_0\hookrightarrow \cdots\hookrightarrow
K_{i}\bakinctosp{\fsimp{}{i}} 
K_{i+1}\bakinctosp{\fsimp{}{i+1}}
\cdots
\bakinctosp{\fsimp{}{j-2}}
K_{j-1}\bakinctosp{\fsimp{}{j-1}}
K_{j}\inctosp{\fsimp{}{j}}K_{j+1}
\leftrightarrow\cdots \leftrightarrow K_\filtcnt
\]
% \[
% \Fcal:\bullet\inctosp{\fsimp{}{0}} \cdots\inctosp{\fsimp{}{i-1}}
% \bullet\bakinctosp{\fsimp{}{i}} \bullet\bakinctosp{\fsimp{}{i+1}}\cdots
% \bakinctosp{\fsimp{}{j-2}}
% \bullet\bakinctosp{\fsimp{}{j-1}}
% \bullet\inctosp{\fsimp{}{j}}
% \bullet\leftrightarrowsp{\fsimp{}{j+1}}\cdots \leftrightarrowsp{\fsimp{}{\filtcnt-1}} \bullet
% \]
Since $\Fcal$ is non-repetitive,
we have $\fsimp{}{j-1}\neq\fsimp{}{j}$.
So we can switch 
% the deletion of 
$\bakinctosp{\fsimp{}{j-1}}$
% and addition of 
and $\inctosp{\fsimp{}{j}}$ 
(which is an inward switch) to derive a filtration
\[
K_0\hookrightarrow \cdots\hookrightarrow
K_{i}\bakinctosp{\fsimp{}{i}} 
K_{i+1}\bakinctosp{\fsimp{}{i+1}}
\cdots
\bakinctosp{\fsimp{}{j-2}}
K_{j-1}\inctosp{\fsimp{}{j}}
K'_{j}\bakinctosp{\fsimp{}{j-1}}
K_{j+1}
\leftrightarrow\cdots \leftrightarrow K_\filtcnt
\]
We then continue performing such inward switches (e.g., the next switch is on 
% the deletion of 
$\bakinctosp{\fsimp{}{j-2}}$
% and addition of 
and $\inctosp{\fsimp{}{j}}$)
to derive a filtration
\[
\Fcal':K_0\hookrightarrow \cdots\hookrightarrow
K_{i}\inctosp{\fsimp{}{j}}
K'_{i+1}\bakinctosp{\fsimp{}{i}} 
% K'_{i+2}\bakinctosp{\fsimp{}{i+1}}
\cdots
\bakinctosp{\fsimp{}{j-3}}
K'_{j-1}\bakinctosp{\fsimp{}{j-2}}
K'_{j}\bakinctosp{\fsimp{}{j-1}}
K_{j+1}
\leftrightarrow\cdots \leftrightarrow K_\filtcnt
\]
Note that from $\Fcal$ to $\Fcal'$, the up-down `prefix' grows longer.
We can repeat the above operations on the newly derived $\Fcal'$
until the entire filtration turns into an up-down one.
\end{proof}

We now prove Theorem~\ref{thm:dual}:
\begin{proof}[Proof of Theorem~\ref{thm:dual}]
Let 
$
% \[
\ud:\emptyset=K'_0\hookrightarrow 
% K'_1\hookrightarrow
\cdots\hookrightarrow K'_{\simpcnt}=K\hookleftarrow
% K'_{\simpcnt+1}\hookleftarrow
\cdots
\hookleftarrow K'_{2\simpcnt}=\emptyset
% \]
$
be the simplex-wise, up-down filtration for $\Fcal$ 
as described in Proposition~\ref{prop:norep-filt-UD}.
We first prove that Theorem~\ref{thm:dual}
holds for $\ud$.
To prove this,
we decompose
$\ud$ into two non-zigzag filtrations %~\cite{edelsbrunner2000topological}
$\ud_1,\ud_2$ sharing the complex $K'_\simpcnt=K$ as follows:
\[
\begin{tikzpicture}[xscale=2,yscale=1.0]
\draw (-0.7,0) node {$\ud:$};
\draw (0,0) node(00) {$K'_0$};
\draw (1,0) node(10) {$\cdots$};
\draw (2,0) node(20) {$K'_{\simpcnt-1}$};
\draw (3,0) node(30) {$K'_{\simpcnt}$};
\draw (4,0) node(40) {$K'_{\simpcnt+1}$};
\draw (5,0) node(50) {$\cdots$};
\draw (6,0) node(60) {$K'_{2\simpcnt}$};

\draw (-0.7,-1) node {$K/\ud:$};
\draw (0,-1) node(01) {$(K,K'_0)$};
\draw (1,-1) node(11) {$\cdots$};
\draw (2,-1) node(21) {$(K,K'_{\simpcnt-1})$};
\draw (3,-1) node(31) {$\emptyset$}; %{$(K,K'_{\simpcnt})$};
\draw (4,-1) node(41) {$(K,K'_{\simpcnt+1})$};
\draw (5,-1) node(51) {$\cdots$};
\draw (6,-1) node(61) {$(K,K'_{2\simpcnt})$};

%----
\draw[right hook->] (00) edge (10);
\draw[right hook->] (10) edge (20);
\draw[right hook->] (20) edge (30);
\draw[left hook->] (40) edge (30);
\draw[left hook->] (50) edge (40);
\draw[left hook->] (60) edge (50);

\draw[right hook->] (01) edge (11);
\draw[right hook->] (11) edge (21);
\draw[right hook->] (21) edge (31);
\draw[left hook->] (41) edge (31);
\draw[left hook->] (51) edge (41);
\draw[left hook->] (61) edge (51);

\draw[densely dotted,|-latex]
(0,0.65) --
node [midway,above] {$\ud_1$}
(2.95,0.65);

\draw[densely dotted,|-latex]
(6,0.65) --
node [midway,above] {$\ud_2$}
(3.05,0.65);

\draw[densely dotted,|-latex]
(0,-1.65) --
node [midway,below] {$K/\ud_1$}
(2.95,-1.65);

\draw[densely dotted,|-latex]
(6,-1.65) --
node [midway,below] {$K/\ud_2$}
(3.05,-1.65);

% \draw [decorate,decoration={brace,amplitude=10pt,mirror},xshift=-4pt,yshift=0pt]
% (-0.1,-0.3) -- 
% node [midway,below,yshift=-10pt] {$\ud$}
% (4.4,-0.3);
\end{tikzpicture}
\]
Note that we denote $(K,K'_\simpcnt)$ as $\emptyset$
in $K/\ud$
because $\Hm_*(K,K'_\simpcnt)=0$.
We then verify Theorem~\ref{thm:dual} on $\ud$
by utilizing the duality of absolute and relative persistence in the non-zigzag setting~\cite{de2011dualities}. Because of the arrow directions, the birth and death indices
% the two end points
of a closed-open interval of $\Pers_\Dim(\ud)$ 
are also indices for $\ud_1$.
It follows that any closed-open interval $[b,d]\in\Pers_\Dim(\ud)$
is necessarily a {\it finite} interval in $\Pers_\Dim(\ud_1)$.
By the duality for non-zigzag persistence~\cite{de2011dualities},
$[b,d]\in\Pers_\Dim(\ud_1)$ corresponds to $[b,d]\in\Pers_{\Dim+1}(K/\ud_1)$.
Note that $[b,d]\in\Pers_{\Dim+1}(K/\ud_1)$
is also an interval in $\Pers_{\Dim+1}(K/\ud)$,
and hence the correspondence in Theorem~\ref{thm:dual} is satisfied.
Similarly, one can verify the correspondence for 
open-closed intervals in $\Pers_\Dim(\ud)$,
which are necessarily finite intervals in $\Pers_\Dim(\ud_2)$.
Now let $[b,d]\in\Pers_\Dim(\ud)$ be
a closed-closed interval.
Then, $b$ and $d$ must be in the
indices of $\ud_1$ and
$\ud_2$ respectively, 
thus inducing {\it infinite} intervals $[b,\simpcnt]\in\Pers_\Dim(\ud_1)$
and $[\simpcnt,d]\in\Pers_\Dim(\ud_2)$.
Note that while $\ud_2$ is a standard (non-zigzag) filtration
starting with $K'_{2\simpcnt}=\emptyset$ and growing into $K'_\simpcnt=K$,
we inherit the indexing from $\ud$ for $\ud_2$
which is thus an indexing in decreasing order.
Then,
by the duality for non-zigzag persistence~\cite{de2011dualities},
$[b,\simpcnt]\in\Pers_\Dim(\ud_1)$ corresponds to $[0,b-1]\in\Pers_\Dim(K/\ud_1)$
and $[\simpcnt,d]\in\Pers_\Dim(\ud_2)$ corresponds to $[d+1,2\simpcnt]\in\Pers_\Dim(K/\ud_2)$.
This implies the correspondence claimed in Theorem~\ref{thm:dual}.
Moreover, the correspondence as we have verified for $\Pers_*(\ud)$ and $\Pers_*(K/\ud)$
is surjective because the correspondence in~\cite{de2011dualities} is surjective.
Note that arrow directions in $\ud$
disallow any open-open intervals to exist in $\Pers_*(\ud)$.
Later in the proof,  we will see how open-open intervals are introduced
into $\Pers_*(\Fcal)$.
% The correspondence for closed-open and open-closed intervals
% in $\Pers_*(\ud)$ as in Theorem~\ref{thm:dual} are straightforward.

% Now we have shown that Theorem~\ref{thm:dual} holds for $\ud$,
% and 
Next, we prove that the theorem holds for $\Fcal$ by induction.
By Proposition~\ref{prop:norep-filt-UD}, 
$\Fcal$ can be derived from $\ud$ by a sequence of outward switches.
Then, let $\Fcal^0=\ud,\Fcal^1,\ldots,\Fcal^{k-1},\Fcal^k=\Fcal$
be a sequence of filtrations such that each $\Fcal^{i+1}$ is derived from $\Fcal^{i}$
by an outward switch.
Inducting on $i$, we only need to prove that if Theorem~\ref{thm:dual} holds for $\Fcal^{i}$,
then it also holds for $\Fcal^{i+1}$. 

Without loss of generality, suppose that $\Fcal^{i}$ and $\Fcal^{i+1}$ are of the following forms:
\begin{equation*}
% \label{eqn:diamond}
\begin{tikzcd}[column sep=1.6em,
  row sep=0.4em,
  /tikz/column 1/.append style={anchor=base west}]
\Fcal^i: &[-2em] & & & L_j\arrow[rd,hookleftarrow,pos=0.4,"\splx'"]
% L_{j-1}\union L_{j+1}
\\
& L_0\arrow[r,leftrightarrow] & 
  \cdots\arrow[r,leftrightarrow] & 
  L_{j-1}\arrow[ur,hookrightarrow,pos=0.65,"\splx"]
  \arrow[dr,hookleftarrow,pos=0.4,"\splx'"] 
  & & 
  L_{j+1}
%   \arrow[ul,hookrightarrow,pos=0.6,"\splx'",swap]
%   \arrow[dl,hookleftarrow,pos=0.7,"\splx",swap]
  \arrow[r,leftrightarrow] & 
  \cdots\arrow[r,leftrightarrow] & 
  L_\filtcnt\\
\Fcal^{i+1}: & & & & L'_j\arrow[ru,hookrightarrow,pos=0.65,"\splx"]
%\arrow[lu,hookrigharrow,pos=0.7,"\splx'",swap]
% L_{j-1}\intsec L_{j+1}
\\
\end{tikzcd}
\end{equation*}
Note that $K/\Fcal^i$ and $K/\Fcal^{i+1}$ are also related by
a (general version of) Mayer-Vietoris diamond and the mapping of
$\Pers_*(K/\Fcal^i)$ and $\Pers_*(K/\Fcal^{i+1})$ is the same
as in Theorem~\ref{thm:diamond} (see~\cite{carlsson2009zigzag-realvalue}).
% {\red For any inclusion $L_\ell\leftrightarrow L_{\ell+1}$ in $\Fcal^i$,
% the change from $L_\ell$ to $L_{\ell+1}$ either introduces 
% $\ell+1$ as the start of a persistence interval (and we call $\ell+1$
% a {\it positive} index)
% or introduces 
% $\ell$ as the end of a persistence interval (and we call $\ell$
% a {\it negative} index; see~\cite{carlsson2009zigzag-realvalue,dey2021computing,maria2014zigzag}).}
% We call an index in $[0,\filtcnt]$ as {\it positive} (resp. {\it negative})
% if it is the start (resp. end) of a persistence interval.
% Then,
% For the proof, 
Assuming that Theorem~\ref{thm:dual} holds for $\Fcal^{i}$,
we need to verify that the theorem holds for $\Fcal^{i+1}$
under the following cases (recall Remark~\ref{rmk:inc-bd}):
% (\tao{see the works on computation of zigzag 
% persistence~\cite{carlsson2009zigzag-realvalue,dey2021computing,maria2014zigzag}
% for explanation of the cases}):
% (recall Definition~\ref{dfn:pos-neg}):
\begin{enumerate}
    \item $L_{j-1}\hookrightarrow L_{j}$ introduces $j$ as 
    a {\it birth} index;
    % the {\it start} of a {\red persistence interval};
    $L_{j}\hookleftarrow L_{j+1}$ introduces $j+1$ as 
    a {\it birth} index.
    % the {\it start} of a persistence interval.
    
    \item\label{itm:start-end-ind}
    $L_{j-1}\hookrightarrow L_{j}$ introduces $j$ as 
    a {\it birth} index;
    % the {\it start} of a persistence interval;
    $L_{j}\hookleftarrow L_{j+1}$ introduces $j$ as 
    a {\it death} index.
    % the {\it end} of a persistence interval.
    
    \item $L_{j-1}\hookrightarrow L_{j}$ introduces $j-1$ as 
    a {\it death} index;
    % the {\it end} of a persistence interval;
    $L_{j}\hookleftarrow L_{j+1}$ introduces $j+1$ as 
    a {\it birth} index.
    % the {\it start} of a persistence interval.
    
    \item $L_{j-1}\hookrightarrow L_{j}$ introduces $j-1$ as 
    a {\it death} index;
    % the {\it end} of a persistence interval;
    $L_{j}\hookleftarrow L_{j+1}$ introduces $j$ as 
    a {\it death} index.
    % the {\it end} of a persistence interval.
\end{enumerate}

We only verify Case~\ref{itm:start-end-ind} 
% \sout{which is the hardest one}\tamal{not sure how to justify this statement}
and omit the verification for other cases
which is similar.
Now suppose that Case~\ref{itm:start-end-ind} happens.
We define the following maps:
\begin{equation*}
\begin{tikzcd}[column sep=1em,row sep=1.5em]
\Pers_*(\Fcal^{i})\arrow[r,"\phi"]\arrow[d,"\psi",swap,pos=0.4] & 
\Pers_*(K/\Fcal^{i})\arrow[d,"\psi",pos=0.4]
\\
\Pers_*(\Fcal^{i+1})%\arrow[r,mapsto] 
& 
\Pers_*(K/\Fcal^{i+1})
\end{tikzcd}
\end{equation*}
in which:
\begin{itemize}
    \item $\phi$ denotes the mapping of intervals 
    from $\Pers_*(\Fcal^i)$ to $\Pers_*(K/\Fcal^i)$
    by Theorem~\ref{thm:dual}.
    Note that
    for an interval $[b,d]$ in $\Pers_*(\Fcal^i)$,
    $\phi([b,d])$ is a {\it set} of one or two intervals.
    
    \item $\psi$ denotes the bijection from $\Pers_*(\Fcal^i)$ to $\Pers_*(\Fcal^{i+1})$
    and the bijection from $\Pers_*(K/\Fcal^i)$ to $\Pers_*(K/\Fcal^{i+1})$
    by the Diamond Principle (Theorem~\ref{thm:diamond}).
\end{itemize}

To prove that Theorem~\ref{thm:dual} holds for $\Fcal^{i+1}$
under Case~\ref{itm:start-end-ind},
we show that
for any interval $[b,d]$ in $\Pers_*(\Fcal^i)$, the following mappings
\begin{equation*}
\begin{tikzcd}[column sep=1em,row sep=1.5em]
{[}b,d{]}\in\Pers_*(\Fcal^{i})\arrow[r,mapsto,"\phi"]\arrow[d,mapsto,swap,"\psi",pos=0.4] & 
\phi({[}b,d{]})\subseteq\Pers_*(K/\Fcal^{i})\arrow[d,mapsto,"\psi",pos=0.4]
\\
\psi({[}b,d{]})\in\Pers_*(\Fcal^{i+1})%\arrow[r,mapsto] 
& 
\psi(\phi({[}b,d{]}))\subseteq\Pers_*(K/\Fcal^{i+1})
\end{tikzcd}
\end{equation*}
satisfy that the interval $\psi([b,d])$ corresponds to the intervals $\psi(\phi([b,d]))$ as claimed
in Theorem~\ref{thm:dual}.
Therefore, the correspondence in Theorem~\ref{thm:dual} is verified for $\Fcal^{i+1}$.
Furthermore,
the above fact implies the following:
given that the correspondence for $\Fcal^{i}$ is surjective,
the correspondence for $\Fcal^{i+1}$ is also surjective.

For an interval $[b,d]\in\Pers_*(\Fcal^i)$
containing {\it all or none} of $\Set{j-1,j,j+1}$,
% the interval must map to the same interval $[b,d]\in\Pers_\Dim(\Fcal^{i+1})$
% by the Diamond Principle.
% The above fact is also true for 
we note that 
intervals in $\phi([b,d])$
% $\Pers_*(K/\Fcal^i)$ that $[b,d]$ corresponds to by Theorem~\ref{thm:dual}
also contain all or none of $\Set{j-1,j,j+1}$.
For example, if $[b,d]$ is a closed-closed interval in $\Pers_*(\Fcal^i)$
s.t.\ $d<j-1$, then $\phi([b,d])=\Set{[0,b-1],[d+1,\filtcnt]}$,
% the corresponding intervals in $\Pers_*(K/\Fcal^i)$
% are $[0,b-1]$ and $[d+1,\filtcnt]$.
% We have that
where $[0,b-1]$ contains none of $\Set{j-1,j,j+1}$
and $[d+1,\filtcnt]$ contains all of them.
% $\Set{j-1,j,j+1}$.
Hence, by Remark~\ref{rmk:diamond}, 
$\psi([b,d])=[b,d]$,
$\psi(\phi([b,d]))=\phi([b,d])$,
and the intervals' dimensions are preserved. Furthermore,
the types (i.e., `open' or `closed') of the two ends of $[b,d]$
do not change from $\Pers_*(\Fcal^i)$ to $\Pers_*(\Fcal^{i+1})$.
So indeed $\psi(\phi([b,d]))$ are the intervals
that $\psi([b,d])$ corresponds to as stated in Theorem~\ref{thm:dual}.

We then show that for any interval in $\Pers_*(\Fcal^i)$
containing {\it some but not all} of $\Set{j-1,j,j+1}$,
% the interval still corresponds to the correct intervals in 
% the {\red relative module} 
the correspondence is still correct
after the switch. We have the following two cases:

\begin{itemize}
    \item 
First suppose that 
$[j,j]$ does not form an interval in $\Pers_*(\Fcal^i)$.
Then, according to the assumptions in Case~\ref{itm:start-end-ind}, the intervals in $\Pers_*(\Fcal^i)$ 
% not mapping to identical ones by the Diamond Principle
containing some but not all of $\Set{j-1,j,j+1}$
are $[b,j]$ and $[j,d]$, where $b\leq j-1$ and $d\geq j+1$.
If $[b,j]$ is a $\Dim$-th closed-closed interval in $\Pers_*(\Fcal^i)$,
we have the following mappings:
\begin{equation*}
\begin{tikzcd}[column sep=1em,row sep=1.5em]
{[}b,j{]}\in\Pers_\Dim(\Fcal^{i})\arrow[r,mapsto,"\phi"]\arrow[d,mapsto,swap,"\psi",pos=0.4] & 
{[}0,b-1{]},{[}j+1,m{]}\in\Pers_\Dim(K/\Fcal^{i})\arrow[d,mapsto,"\psi",pos=0.4]
\\
{[}b,j-1{]}\in\Pers_\Dim(\Fcal^{i+1})%\arrow[r,mapsto] 
& 
{[}0,b-1{]},{[}j,m{]}\in\Pers_\Dim(K/\Fcal^{i+1})
\end{tikzcd}
\end{equation*}
% then correspondingly there is an interval $[b,j]$ 
% of dimension $\Dim+1$ in $\Pers_*(K/\Fcal^i)$.
% These intervals correspond to $[b,j]\in \Pers_\Dim(\Fcal^{i+1})$
% and $[b,j]\in \Pers_{\Dim+1}(\Fcal^{i+1})$
% by Theorem~\ref{thm:dual},
% where the vertical mappings are from the Diamond Principle 
% and the horizontal mapping is by 
% our inductive hypothesis that
% Theorem~\ref{thm:dual} holds for $\Fcal^i$.
% \tamal{inductive hypothesis}.% from Theorem~\ref{thm:dual}.
% Hence, the correspondence of $[b,j-1]\in\Pers_*(\Fcal^{i+1})$
% in $\Pers_*(K/\Fcal^{i+1})$ is correct.
where $[0,b-1]$ and $[j,m]$ of dimension $\Dim$ are exactly intervals in $\Pers_*(K/\Fcal^{i+1})$
that $[b,j-1]\in\Pers_\Dim(\Fcal^{i+1})$ (which is also closed-closed) corresponds to.

We can similarly verify the correctness of correspondence
when $[b,j]\in\Pers_*(\Fcal^i)$ is open-closed
(note that $j$ must be a closed death index)
and for $[j,d]\in\Pers_*(\Fcal^i)$. 
% \tamal{something is missing here. I am guessing that this proof will be revised later...the main line of attack is clear.}

\item
Now suppose that 
$[j,j]$ forms a $\Dim$-th interval in $\Pers_*(\Fcal^i)$.
Then, again because of the assumptions in Case~\ref{itm:start-end-ind}, $[j,j]$ is the only interval in $\Pers_*(\Fcal^i)$ 
% not mapping to itself (with dimension preserved) by the Diamond Principle.
containing some but not all of $\Set{j-1,j,j+1}$.
Since $[j,j]\in\Pers_\Dim(\Fcal^i)$ is closed-closed, we have the following mappings:
\begin{equation*}
\begin{tikzcd}[column sep=1em,row sep=1.5em]
{[}j,j{]}\in\Pers_\Dim(\Fcal^{i})\arrow[r,mapsto,"\phi"]\arrow[d,mapsto,swap,"\psi",pos=0.4] & 
{[}0,j-1{]},{[}j+1,m{]}\in\Pers_\Dim(K/\Fcal^{i})\arrow[d,mapsto,"\psi",pos=0.4]
\\
{[}j,j{]}\in\Pers_{\Dim-1}(\Fcal^{i+1})%\arrow[r,mapsto] 
& 
{[}0,j{]},{[}j,m{]}\in\Pers_\Dim(K/\Fcal^{i+1})
\end{tikzcd}
\end{equation*}
where $[j,j]$ becomes open-open in $\Pers_{\Dim-1}(\Fcal^{i+1})$
and its correspondence in $\Pers_*(K/\Fcal^{i+1})$
illustrated above
is exactly as specified in Theorem~\ref{thm:dual}.
Notice that the above transition is the only time when
open-open intervals are introduced into the absolute modules, 
given that $\Pers_*(\ud)$ contains no open-open intervals
initially.
\qedhere
\end{itemize}
\end{proof}

\begin{remark}
The reason why our proof of Theorem~\ref{thm:dual} 
% in this section
does not work for repetitive filtrations 
% $\Fcal$ 
is that Proposition~\ref{prop:norep-filt-UD}
does not hold anymore. 
Suppose that there is a simplex $\sG$ in $\Fcal$ which is added again after being deleted.
Then, the sequence of inward switches
used to turn $\Fcal$ into $\ud$
in the proof of Proposition~\ref{prop:norep-filt-UD} 
must contain a switch of the deletion of $\sG$
with the addition of $\sG$, which is invalid due to Remark~\ref{rmk:diamond-invalid}.
% See also Appendix~\ref{sec:no-dual}.
\end{remark}

% \subsection{Computing (general) relative zigzag {\red(needs construction)}}
% % \label{sec:rel-zz}

% As evident from Section~\ref{sec:no-dual},
% relative zigzag persistence reveals characteristics
% of data in its own right
% and therefore should be treated as an additional 
% topological descriptor.
% However, instead of always fixing a \sout{dividend} complex,
% we draw attention to the
% the following more general form of relative zigzag filtration:
% \begin{equation}\label{eqn:gen-rel-zz}
%     \Fcal:(K_0,L_0)\leftrightarrow (K_1,L_1)\leftrightarrow\cdots\leftrightarrow (K_\filtcnt,L_\filtcnt)
% \end{equation}
% A relative zigzag filtration as in Equation~(\ref{eqn:gen-rel-zz})
% has seen application in recent works (see, e.g., \cite{dey2020persistence}).

% To compute a relative zigzag filtration as in XX,
% we can use the ``coning trick'' (see~\cite{cohen2009extending})
% by ...
% Specifically, let $\oG$ be an additional vertex to $K$,
% then the {\it cone} $\oG\cdot\sG$ of a simplex $\sG$ of $K$ 
% is the simplex $\Set{\oG}\union\sG$.
% The cone $\oG\cdot L_i$ of an $L_i$ consists of three parts: 
% the vertex $\oG$, $L_i$, and cones of all simplices of $L_i$.

\section{Codimension-zero and -one zigzag for manifolds}
\label{sec:Lefschetz}

In this section, we first present a near-linear algorithm for computing
the $\Dim$-th relative barcode of a zigzag filtration
on a $\Dim$-manifold.
The algorithm utilizes Lefschetz duality~\cite{cohen2009extending,munkres2018elements}
to convert the problem into computing the 0-th barcode
of a {\it dual} filtration.
We then utilize the near-linear algorithm for 0-dimensional zigzag~\cite{dey2021graph}
to achieve the time complexity.
Assuming further that the filtration is non-repetitive,
the weak duality theorem in Section~\ref{sec:duality} then
% Section~\ref{sec:assoc-abs-rel},
gives an algorithm for computing the $\Dim$-th barcode
and (a subset of) the $(\Dim-1)$-th barcode for the absolute homology
in near-linear time.
% {\red(connection to Bauer's paper?)}

Throughout the section,
we assume the following input:
\begin{itemize} \item 
$K$ is a simplicial 
    % complex whose underlying space is a 
$\Dim$-manifold
with $\simpcnt$ 
% is the number of 
simplices
% in $K$.
    % \item 
    and $\Fcal:\emptyset=K_0\leftrightarrowsp{\sG_0} K_1\leftrightarrowsp{\sG_1}
\cdots
% \leftrightarrow K_{\filtcnt-1}
\leftrightarrowsp{\sG_{\filtcnt-1}} K_\filtcnt=\emptyset
$
is a 
% non-repetitive, 
simplex-wise filtration 
of $K$.
\end{itemize}

Let $G$ denote the {\it dual graph} of $K$,
where vertices of $G$ bijectively map to $\Dim$-simplices of $K$
and edges of $G$ bijectively map to $(\Dim-1)$-simplices of $K$.
Define a {\it dual filtration} 
$\dfilt: G_0 \leftrightarrow G_1 \leftrightarrow \cdots \leftrightarrow G_\filtcnt$,
% of $\Fcal$ 
where each $G_i$ is a subgraph of $G$
s.t.\ a vertex (resp.\ edge) of $G$ is in $G_i$ iff
its dual $\Dim$-simplex (resp.\ $(\Dim-1)$-simplex) is {\it not} in $K_i$.
Each $G_i$ is a well-defined subgraph
because if a $(\Dim-1)$-simplex of $K$ is not in $K_i$,
all its $\Dim$-cofaces are also not in $K_i$.
Accordingly, the vertices of each edge of $G_i$ are in $G_i$.
Intuitively, $G_i$ encodes the connectivity of $|K|-|K_i|$
and an example of $\dfilt$ is given in~\cite[Section 5]{dey2021graph}
where $\Real^2$ is viewed as its one-point compactification $\mathbb{S}^2$.
Note that similarly as in~\cite[Section 5]{dey2021graph}, 
inclusion directions in $\dfilt$ are reversed
and $\dfilt$ is not necessarily simplex-wise
because an arrow may introduce no changes.

We now provide the first conclusion of this section:
\begin{theorem}\label{thm:kbyf-dfilt-eq}
$\Pers(\Hm_\Dim(K/\Fcal))=\Pers(\Hm_0(\dfilt))$.
Hence, $\Pers(\Hm_\Dim(K/\Fcal))$ can be computed with time complexity
$O(\filtcnt\log^2 \simpcnt\allowbreak+\filtcnt\log \filtcnt)$.
\end{theorem}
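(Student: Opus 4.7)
The plan is to use Lefschetz duality to match $\Hm_\Dim(K/\Fcal)$ with $\Hm_0(\dfilt)$ pointwise, verify the identification is natural with respect to the inclusion-induced maps, and then invoke the near-linear $0$-dimensional zigzag algorithm of~\cite{dey2021graph}. The barcode equality follows once we observe that dual zigzag modules over a field have the same barcode (each interval module being self-dual up to quiver reversal).

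First I would establish, for each index $i$, an isomorphism $\Hm_\Dim(K,K_i)\simeq\Hm_0(G_i)$ over $\Zbb_2$~\cite{cohen2009extending,munkres2018elements}. A relative $\Dim$-cycle in $(K,K_i)$ is a $\Zbb_2$-combination $\sum a_\sG\sG$ of $\Dim$-simplices outside $K_i$ whose boundary lies in $K_i$. By the manifold property, each $(\Dim-1)$-simplex $\tG$ outside $K_i$ has exactly two $\Dim$-cofaces (both outside $K_i$, since $K_i$ is closed under taking faces), and the relative cycle condition at $\tG$ forces the coefficients of these two cofaces to agree. Translating through the dual graph, this says the chain is constant on the two endpoints of the edge of $G_i$ dual to $\tG$, hence constant on each connected component of $G_i$. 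Since $K$ carries no $(\Dim+1)$-simplices, $\Hm_\Dim(K,K_i)$ is the relative cycle space itself, so its dimension equals the number of components of $G_i$, which is $\dim\Hm_0(G_i)$.

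Next I would verify that these pointwise isomorphisms assemble into a compatible identification at the level of zigzag modules. A forward inclusion $K_i\incto K_{i+1}$ in $\Fcal$ induces $\Hm_\Dim(K,K_i)\to\Hm_\Dim(K,K_{i+1})$, while on the dual side $G_{i+1}\subseteq G_i$ yields a backward arrow in $\dfilt$ with induced map $\Hm_0(G_i)\leftarrow\Hm_0(G_{i+1})$; backward arrows of $\Fcal$ are handled symmetrically, and arrows that change a simplex of codimension at least two induce identities on both sides. A direct chain-level check using the component interpretation above shows the relevant squares commute modulo transposes, so $\Hm_\Dim(K/\Fcal)$ and $\Hm_0(\dfilt)$ are dual zigzag modules. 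Taking duals preserves the barcode (each interval module $\Ical^{[b,d]}$ is self-dual as a representation of the reversed quiver), giving $\Pers(\Hm_\Dim(K/\Fcal))=\Pers(\Hm_0(\dfilt))$. For the complexity, $\dfilt$ is built from $\Fcal$ in $O(\filtcnt)$ time since each arrow of $\Fcal$ affects at most one vertex or edge of $G$, and then the $O(\filtcnt\log^2 \simpcnt+\filtcnt\log \filtcnt)$ algorithm of~\cite{dey2021graph} on $\dfilt$ supplies the stated bound.

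The main obstacle will be the treatment of $\partial K$ when it is nonempty: a $(\Dim-1)$-simplex on $\partial K$ has only one $\Dim$-coface, so the clean bijection between relative cycles and component labelings of $G_i$ must be corrected by forcing the labels near $\partial K$ to vanish. Following~\cite[Section 5]{dey2021graph}, this is handled by augmenting $G$ with a single point-at-infinity vertex joined by a dual edge to each boundary-adjacent $\Dim$-simplex, effectively one-point compactifying the underlying space; the component containing this vertex is then the ``forced zero'' component, and one works with the remaining components, at which point the counting and naturality arguments go through unchanged.
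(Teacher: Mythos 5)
Your proof is correct but takes a more direct and elementary route than the paper's. The paper factors through cohomology: it first passes from $\Hm_\Dim(K/\Fcal)$ to the Hom-dual module $\Mcal_1$ (invoking, via a cited proposition, that dualization preserves barcodes), then identifies $\Mcal_1\simeq\Hm^\Dim(K/\Fcal)$ by the universal coefficient theorem, applies Lefschetz duality $\Hm^\Dim(K,K_i)\simeq\Hm_0(|K|-|K_i|)$, and finally cites another proposition that $\Hm_0(|K|-|K_\cdot|)$ and $\Hm_0(\dfilt)$ have equal barcodes. You instead compute $\Hm_\Dim(K,K_i)$ directly at the chain level: since $K$ has no $(\Dim+1)$-simplices, this is exactly the relative cycle space, and the manifold condition forces the $\Zbb_2$-coefficients to be constant on components of $G_i$, giving $\Hm_\Dim(K,K_i)\simeq\Hom(\Hm_0(G_i),\Zbb_2)$ explicitly. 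Your naturality check correctly identifies that the inclusion-induced map on relative cycles is the transpose (restriction/pullback along the component map) of the corresponding pushforward on $\Hm_0$, so the two zigzag modules are Hom-duals and hence share a barcode. This collapses the paper's UCT-plus-Lefschetz chain into one self-contained combinatorial argument, which is arguably more transparent; the paper's route buys brevity by citing the two external propositions from~\cite{dey2021graph}. One nuance: the Lefschetz duality the paper invokes (and your two-coface argument) implicitly requires $K$ to be a closed manifold. Your final paragraph flags exactly this and sketches the one-point-compactification fix used in~\cite[Section~5]{dey2021graph}; that is the right concern to raise, though the paper does not explicitly handle manifolds with boundary either, so your closed-case argument already matches the scope of the paper's claim.
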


Assuming that $\Pers(\Hm_\Dim(K/\Fcal))=\Pers(\Hm_0(\dfilt))$,
we first construct the dual graph $G$ and the dual filtration $\dfilt$ in linear time.
After this, we compute $\Pers(\Hm_0(\dfilt))$ using the algorithm in~\cite{dey2021graph}.
% and get $\Pers(\Hm_\Dim(K/\Fcal))$.
The time complexity then follows from the time complexity for computing
the 0-th barcode~\cite{dey2021graph}.

The proof of $\Pers(\Hm_\Dim(K/\Fcal))=\Pers(\Hm_0(\dfilt))$
% in Theorem~\ref{thm:kbyf-dfilt-eq} 
is by combining
the conclusions of Proposition~\ref{prop:M1-M2-eq},
Proposition~\ref{prop:kbyf-M1-eq},
and Proposition~\ref{prop:M2-dfilt-eq}.
Before presenting the propositions,
we first provide the (natural version of) Lefschetz duality~\cite[$\mathsection$72]{munkres2018elements}:
\begin{theorem}[Lefschetz duality]
\label{thm:lefschetz}
Let $k$ be an integer s.t.\ $0\leq k\leq\Dim$.
For the complex $K$ {\rm(}which is a $\Dim$-manifold{\rm)},
one can assign each subcomplex $L$ of $K$ an isomorphism 
$\lG_L:\Hm^k(K,L)\to\Hm_{\Dim-k}(|K|-|L|)$.
Moreover, the assignment is natural w.r.t.\ inclusion, 
i.e., for another subcomplex $L'\subseteq L$,
the following diagram
commutes,
where the horizontal maps are induced by inclusion:
\begin{equation*}
\begin{tikzcd}[column sep=1em,row sep=1.5em]
\Hm^k(K,L)\arrow[r]\arrow{d}{\simeq}[swap]{\lG_{L}} 
  & \Hm^k(K,L')\arrow{d}{\lG_{L'}}[swap]{\simeq}\\
\Hm_{\Dim-k}(|K|-|L|)\arrow[r] 
  & \Hm_{\Dim-k}(|K|-|L'|)
\end{tikzcd}
\end{equation*}

\end{theorem}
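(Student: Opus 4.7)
The plan is to prove Lefschetz duality via the dual block decomposition of the manifold $K$, following the classical strategy (cf.~Munkres). First I would construct, from the barycentric subdivision $\mathrm{sd}(K)$, the dual cell complex $K^*$: for each $k$-simplex $\sG$ of $K$, the dual block $D(\sG)$ consists of all open simplices $\tG$ of $\mathrm{sd}(K)$ whose minimal vertex (in the standard ordering by dimension) is the barycenter of $\sG$. Because $K$ is a $\Dim$-manifold, $D(\sG)$ is a $(\Dim-k)$-cell and the collection $\{D(\sG)\}$ gives a CW structure on $|K|$.

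The key geometric input is that for any subcomplex $L\subseteq K$, the open subset $|K|-|L|$ deformation retracts onto the subcomplex $K^*_L := \bigcup_{\sG\notin L} D(\sG)$ of $K^*$. This follows from a standard collapsing argument in $\mathrm{sd}(K)$: one pushes away from $|L|$ along the radial direction inside each dual block. Having this, the chain-level duality follows from a bijection of bases. The simplicial cochain complex $C^k(K,L)$ has as basis the set of $k$-simplices of $K$ not in $L$, and this set is in bijection $\sG\leftrightarrow D(\sG)$ with the $(\Dim-k)$-cells of $K^*_L$. Working with $\Zbb_2$-coefficients (as declared in the preliminaries), I would check that under this correspondence, the simplicial coboundary $\delta:C^k(K,L)\to C^{k+1}(K,L)$ is carried to the cellular boundary $\partial:C_{\Dim-k}(K^*_L)\to C_{\Dim-k-1}(K^*_L)$. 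This is precisely the incidence condition $[\sG:\tG]=[D(\tG):D(\sG)]$, which reduces to counting $(k{+}1)$-cofaces of a given $k$-simplex --- a verification that is local and uses only that the link of each simplex is a sphere (manifold property). One then defines $\lG_L$ as the composite of this chain-level isomorphism with the deformation-retract isomorphism $\Hm_{\Dim-k}(K^*_L)\cong\Hm_{\Dim-k}(|K|-|L|)$.

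Naturality with respect to the inclusion $L'\hookrightarrow L$ is then almost formal. The restriction map $C^*(K,L)\to C^*(K,L')$ is the inclusion of the sub-cochain complex indexed by $\{\sG\notin L\}\subseteq\{\sG\notin L'\}$, which under the bijection is exactly the subcomplex inclusion $K^*_L\hookrightarrow K^*_{L'}$. This subcomplex inclusion is in turn induced by the topological inclusion $|K|-|L|\hookrightarrow|K|-|L'|$ through the compatible deformation retracts. Hence the square commutes at the chain level and therefore on homology.

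The main obstacle is the chain-level identification of $\delta$ with $\partial$. Over $\Zbb_2$ this removes all orientation bookkeeping, but one still must argue that the dual block $D(\sG)$ and its boundary dual blocks $\{D(\tG):\sG\subsetneq\tG,\ \dim\tG=\dim\sG+1\}$ really do meet along codimension-one faces in the CW sense. This requires the local structure given by $K$ being a $\Dim$-manifold (so that links are spheres), and is the one place where the manifold hypothesis genuinely enters. Everything else --- basis bijection, deformation retract, naturality square --- is combinatorial and fits on half a page once this local lemma is in hand.
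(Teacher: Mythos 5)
The paper does not prove this theorem; it is stated as a known result and attributed to Munkres~\cite[$\mathsection$72]{munkres2018elements}. Your proposal reconstructs exactly the dual block decomposition argument that appears in the cited reference, so there is no separate ``paper proof'' to compare against --- you have filled in the argument the paper delegates. The sketch is correct as written, and over $\Zbb_2$ the orientation bookkeeping indeed vanishes. The points you flag as needing care do check out: $K^*_L$ is a genuine subcomplex of $K^*$ because $L$ being closed under faces makes $\{\sG\notin L\}$ closed under taking cofaces; the retract of $|K|-|L|$ lands inside $K^*_L$ because any point of a dual block $D(\sG)$ has carrier (in $K$) containing $\sG$, so it cannot lie in $|L|$; and the incidence $[D(\tG):D(\sG)]=1$ for $\sG\subsetneq\tG$ of codimension one is precisely where the manifold hypothesis (links of simplices are spheres) is used. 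Naturality is then automatic, since every arrow in the square is a chain-level inclusion or restriction compatible with the basis bijection. One hypothesis you, like the paper, leave implicit is that $K$ must be a $\Dim$-manifold \emph{without boundary}: for a boundary $k$-simplex $\sG$, $D(\sG)$ is only a half-$(\Dim-k)$-cell, the CW structure fails, and the stated isomorphism is false. (For instance, $K$ a triangulated interval, $L$ one endpoint, $k=1$ gives $\Hm^1(K,L)=0$ but $\Hm_0(|K|-|L|)=\Zbb_2$.) Making this explicit would strengthen the write-up, but it is not a defect relative to the paper, which elides the same point.
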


\begin{proposition}\label{prop:M1-M2-eq}
The following zigzag modules are isomorphic:
\begin{equation*}
\begin{tikzcd}[column sep=1em,row sep=1.5em,     
/tikz/column 1/.append style={anchor=base west}]
\Mcal_1: & \Hom(\Hm_\Dim(K,K_0))\arrow[r,leftrightarrow]\arrow[d,pos=0.4,"\simeq"] 
  & \Hom(\Hm_\Dim(K,K_1)) \arrow[r,leftrightarrow]\arrow[d,pos=0.4,"\simeq"] 
  & \cdots \arrow[r,leftrightarrow] 
  & \Hom(\Hm_\Dim(K,K_\filtcnt))\arrow[d,pos=0.4,"\simeq"]\\
\Hm^\Dim(K/\Fcal): & \Hm^\Dim(K,K_0)\arrow[r,leftrightarrow]\arrow[d,pos=0.4,"\simeq"] 
  & \Hm^\Dim(K,K_1) \arrow[r,leftrightarrow]\arrow[d,pos=0.4,"\simeq"] 
  & \cdot \arrow[r,leftrightarrow] 
  & \Hm^\Dim(K,K_\filtcnt) \arrow[d,pos=0.4,"\simeq"] \\
\Mcal_2: & \Hm_0(|K|-|K_0|)\arrow[r,leftrightarrow] 
  & \Hm_0(|K|-|K_1|)\arrow[r,leftrightarrow] 
  & \cdots \arrow[r,leftrightarrow] 
  & \Hm_0(|K|-|K_\filtcnt|)
\end{tikzcd}
\end{equation*}
which implies that $\Pers(\Mcal_1)=\Pers(\Mcal_2)$.
\end{proposition}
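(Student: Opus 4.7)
The plan is to produce two term-by-term natural isomorphisms and then assemble them. First I would pass from $\Mcal_1$ to $\Hm^\Dim(K/\Fcal)$ by invoking the Universal Coefficient Theorem for cohomology. Because the coefficient ring is the field $\Zbb_2$, all Ext terms vanish and for each $i$ one obtains a canonical isomorphism $\Hm^\Dim(K,K_i)\simeq \Hom(\Hm_\Dim(K,K_i),\Zbb_2)$. The UCT is natural with respect to maps of pairs, so each inclusion of pairs $(K,K_i)\hookrightarrow (K,K_{i+1})$ (arising from a forward arrow of $\Fcal$) or its reverse (arising from a backward arrow of $\Fcal$) yields a commutative square pairing the corresponding map of $\Mcal_1$ with that of $\Hm^\Dim(K/\Fcal)$. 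Stitching these squares together gives $\Mcal_1\simeq \Hm^\Dim(K/\Fcal)$ as zigzag modules.

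Next I would pass from $\Hm^\Dim(K/\Fcal)$ to $\Mcal_2$ by applying Theorem~\ref{thm:lefschetz} with $k=\Dim$ to each subcomplex $K_i$, giving isomorphisms $\lG_{K_i}:\Hm^\Dim(K,K_i)\to \Hm_0(|K|-|K_i|)$. The theorem's stated naturality supplies the commutative square for every inclusion of one subcomplex into another, which covers all adjacent pairs in $\Fcal$ in either direction. Hence the family $\{\lG_{K_i}\}$ assembles into an isomorphism $\Hm^\Dim(K/\Fcal)\simeq \Mcal_2$ of zigzag modules. Composing the two isomorphisms yields $\Mcal_1\simeq \Mcal_2$, and since barcodes are complete isomorphism invariants of zigzag modules, the equality $\Pers(\Mcal_1)=\Pers(\Mcal_2)$ follows.

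The main point requiring care, more a matter of bookkeeping than a real obstacle, is tracking arrow directions. A forward inclusion $K_i\hookrightarrow K_{i+1}$ in $\Fcal$ produces a \emph{backward} map $\Hm^\Dim(K,K_{i+1})\to \Hm^\Dim(K,K_i)$ in $\Hm^\Dim(K/\Fcal)$, because relative cohomology is contravariant in the subcomplex; correspondingly, a backward map in $\Mcal_1$ arises via precomposition on the homology side, and a backward map $\Hm_0(|K|-|K_{i+1}|)\to \Hm_0(|K|-|K_i|)$ in $\Mcal_2$ arises because $|K|-|K_{i+1}|\subseteq |K|-|K_i|$. Backward inclusions in $\Fcal$ reverse all three consistently. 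Once these directions are aligned, the naturality statements of UCT and Lefschetz duality apply uniformly to each zigzag arrow, and the two vertical columns of isomorphisms in the diagram are each families of natural isomorphisms, so they are isomorphisms of zigzag modules as claimed.
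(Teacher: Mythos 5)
Your proposal is correct and takes essentially the same route as the paper: the paper's proof is a two-line citation of the Universal Coefficient Theorem (for the top isomorphism) and Lefschetz duality (Theorem~\ref{thm:lefschetz}, for the bottom one), and you have simply filled in the details the paper leaves implicit, namely that the field coefficients kill the Ext term, that naturality of each isomorphism supplies the commuting squares, and that all three modules reverse the original arrow directions consistently.
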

\begin{proof}
The isomorphism between $\Mcal_1$ and $\Hm^\Dim(K/\Fcal)$
is given by the universal coefficient theorem~\cite[pg.\ 196,198]{hatcher2002algebraic},
and 
the isomorphism between $\Hm^\Dim(K/\Fcal)$ and $\Mcal_2$
is given by the Lefschetz duality (Theorem~\ref{thm:lefschetz}).
\end{proof}

\begin{proposition}\label{prop:kbyf-M1-eq}
$\Pers(\Hm_\Dim(K/\Fcal))=\Pers(\Mcal_1)$.
\end{proposition}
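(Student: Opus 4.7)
The plan is to observe that $\Mcal_1$ is essentially the pointwise dual of the zigzag module $\Hm_\Dim(K/\Fcal)$, and then invoke the general fact that dualization of a zigzag module of finite-dimensional $\Zbb_2$-vector spaces preserves the barcode (as a multiset of intervals). First, I would unwind the definition of $\Mcal_1$: each space is the algebraic dual $\Hom(\Hm_\Dim(K,K_i),\Zbb_2)$, and each linear map in $\Mcal_1$ is naturally the transpose of the corresponding map in $\Hm_\Dim(K/\Fcal)$, running in the reversed direction. Thus $\Mcal_1$ coincides, as a quiver representation, with the dual of $\Hm_\Dim(K/\Fcal)$ over the quiver with all arrows reversed.

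Next, I would decompose $\Hm_\Dim(K/\Fcal)\simeq \bigoplus_k \Ical^{[b_k,d_k]}$ as guaranteed by the structure theorem for zigzag modules. Applying the contravariant exact functor $\Hom(-,\Zbb_2)$, which commutes with finite direct sums of finite-dimensional spaces, yields $\Mcal_1\simeq \bigoplus_k \left(\Ical^{[b_k,d_k]}\right)^*$. It then suffices to check that for each interval $[b,d]$, the dual $\left(\Ical^{[b,d]}\right)^*$ is again an interval module supported on the same index set $[b,d]$: the underlying vector space at each index is $\Zbb_2^*\simeq \Zbb_2$, the zero spaces remain zero, and the connecting identity maps remain identities upon transposition, merely traversing the reversed arrows. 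Hence $\Pers(\Mcal_1)=\{[b_k,d_k]\}_k=\Pers(\Hm_\Dim(K/\Fcal))$.

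The main bookkeeping concern is the reversal of arrow directions, but since $\Pers(\cdot)$ in this paper records only the endpoints of persistence intervals (the open/closed designations from Definition~\ref{dfn:open-close-bd} depend on the ambient module but are not part of the multiset $\Pers(\cdot)$ itself), this reversal does not affect the barcode equality. No serious obstacle arises; the statement is essentially the zigzag analogue of the classical fact that a finite-dimensional representation and its linear dual have isomorphic indecomposable summands.
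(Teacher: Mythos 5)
Your proof is correct and is the standard dualization argument that the paper simply delegates to Proposition~24 of the cited reference~\cite{dey2021graph}: $\Mcal_1$ is the pointwise $\Zbb_2$-dual of $\Hm_\Dim(K/\Fcal)$ over the reversed quiver, and since $\Hom(-,\Zbb_2)$ is exact and commutes with finite direct sums of finite-dimensional spaces, it carries an interval decomposition of one module to an interval decomposition of the other with each summand $\bigl(\Ical^{[b,d]}\bigr)^*$ supported on the same index range $[b,d]$. Your remark that $\Pers(\cdot)$ records only the multiset of interval endpoints and not the open/closed designations (which would flip under arrow reversal) is exactly the right point to make and closes the one subtlety.
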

\begin{proof}
The proof is the same as the proof of Proposition 24
in the full version of~\cite{dey2021graph}.
\end{proof}

\begin{proposition}\label{prop:M2-dfilt-eq}
$\Pers(\Mcal_2)=\Pers(\Hm_0(\dfilt))$.
\end{proposition}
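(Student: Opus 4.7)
The plan is to construct a natural isomorphism between the modules $\Mcal_2$ and $\Hm_0(\dfilt)$, and then invoke invariance of the barcode under isomorphism. Concretely, for each index $i$ I will define a linear map $\eta_i:\Hm_0(G_i)\to\Hm_0(|K|-|K_i|)$, show each $\eta_i$ is an isomorphism, and verify that these maps commute with the structure maps of the two modules.

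To build $\eta_i$, note that every vertex $v$ of $G_i$ corresponds to a $\Dim$-simplex $\sG_v$ of $K$ with $\sG_v\not\in K_i$, so the open simplex $\interior(\sG_v)$ lies in $|K|-|K_i|$; send the class $[v]\in\Hm_0(G_i)$ to the class of any point in $\interior(\sG_v)$ and extend linearly. Well-definedness on components reduces to the claim that if $v,w$ lie in the same component of $G_i$, then the interiors of $\sG_v,\sG_w$ lie in the same path component of $|K|-|K_i|$: an edge $e$ of $G_i$ between $v$ and $w$ corresponds to a $(\Dim-1)$-simplex $\tG\not\in K_i$ which is a common face of $\sG_v$ and $\sG_w$, and the path passing through the interior of $\tG$ stays in $|K|-|K_i|$.

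The hard step is showing $\eta_i$ is a bijection on path components. Surjectivity uses that $K$ is a $\Dim$-manifold: every point of $|K|$ lies in the closure of some open $\Dim$-simplex, and since any point of $|K|-|K_i|$ has an open neighborhood in $|K|$ meeting the interior of some $\Dim$-simplex not in $K_i$, every path component contains the interior of at least one vertex of $G_i$. Injectivity is the main obstacle: given a path $\gamma$ in $|K|-|K_i|$ from $\interior(\sG_v)$ to $\interior(\sG_w)$, I must produce an edge path in $G_i$. Using the manifold property, every point $x\in|K|-|K_i|$ has a small open neighborhood that meets the interior of $\sG_v$ only when $\sG_v\not\in K_i$ and, when it meets two distinct top-simplex interiors, it also meets the interior of a common $(\Dim-1)$-face (which is then also not in $K_i$ because $K_i$ is a subcomplex). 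Using compactness of $\gamma$ and a standard subdivision/general-position argument, one may homotope $\gamma$ (rel endpoints) inside $|K|-|K_i|$ so that it is transverse to the $(\Dim-2)$-skeleton of $K$ and crosses the $(\Dim-1)$-skeleton only through interiors of $(\Dim-1)$-simplices not in $K_i$; the sequence of simplices traversed then yields the desired edge path in $G_i$.

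Finally, naturality: each arrow $K_i\leftrightarrow K_{i+1}$ in $\Fcal$ induces a reversed inclusion $G_i\leftrightarrow G_{i+1}$ in $\dfilt$ and a reversed inclusion of open subsets $|K|-|K_i|\leftrightarrow|K|-|K_{i+1}|$. A vertex of the smaller graph maps to a vertex of the larger graph (via set inclusion) and its chosen interior point maps to itself under the corresponding inclusion of open subsets, so the square
\[
\begin{tikzcd}[column sep=1.5em,row sep=1.5em]
\Hm_0(G_i)\arrow[r,leftrightarrow]\arrow[d,"\eta_i"] & \Hm_0(G_{i+1})\arrow[d,"\eta_{i+1}"] \\
\Hm_0(|K|-|K_i|)\arrow[r,leftrightarrow] & \Hm_0(|K|-|K_{i+1}|)
\end{tikzcd}
\]
commutes on generators, hence everywhere. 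Since each $\eta_i$ is an isomorphism, $\Hm_0(\dfilt)\simeq\Mcal_2$ as zigzag modules, and thus $\Pers(\Mcal_2)=\Pers(\Hm_0(\dfilt))$.
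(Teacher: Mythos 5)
The paper does not spell out a proof for this proposition; it cites Proposition 25 of the full version of the graph-zigzag paper, so there is nothing in the text to compare against directly. Your proposal stands on its own and is essentially correct: you construct a natural isomorphism $\eta_i:\Hm_0(G_i)\to\Hm_0(|K|-|K_i|)$ by sending a dual vertex to a point in the open top-simplex it represents, check bijectivity on path components, and check commutativity of the squares, which is exactly the kind of argument the cited result hinges on. A few remarks on precision. The well-definedness and surjectivity parts are clean: $|K_i|$ is closed so $|K|-|K_i|$ is open, and in a $\Dim$-manifold the $(\Dim-1)$-skeleton is nowhere dense, so every open subset meets the interior of some $\Dim$-simplex, which is then automatically not in $K_i$. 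The one sentence that is not literally true is the claim that a small neighborhood of $x\in|K|-|K_i|$ that meets two distinct top-simplex interiors must also meet the interior of a common $(\Dim-1)$-face; this fails, for instance, when $x$ lies in the interior of a $(\Dim-2)$-simplex surrounded by several top-simplices that are not pairwise adjacent. You do not actually use this claim, though: the argument you carry out is the general-position homotopy, pushing $\gamma$ off the $(\Dim-2)$-skeleton (codimension $\geq 2$, so a generic path misses it, and the perturbation can be kept inside the open set $|K|-|K_i|$) so that every crossing of the $(\Dim-1)$-skeleton is transverse through the interior of a single $(\Dim-1)$-simplex $\tau$; then $\tau\not\in K_i$ because $\gamma$ avoids $|K_i|$, and since $K$ is a $\Dim$-manifold $\tau$ has exactly two top-cofaces, giving the desired edge of $G_i$. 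The naturality check is also correct once one keeps the reversed arrow directions straight: a forward inclusion $K_i\incto K_{i+1}$ yields $G_{i+1}\subseteq G_i$ and $|K|-|K_{i+1}|\subseteq|K|-|K_i|$, and the square commutes on generators. So the proof goes through; just drop or rephrase the misleading parenthetical sentence and, if you want full rigor, cite a simplicial approximation or transversality lemma for the homotopy step rather than calling it standard.
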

\begin{proof}
The proof is the same as the proof of Proposition 25
in the full version of~\cite{dey2021graph}.
\end{proof}

We now assume that $\Fcal$ is non-repetitive,
by which
we can draw upon the weak duality theorem in Section~\ref{sec:duality}.
% Section~\ref{sec:assoc-abs-rel}.
Our goal is to (partially) recover intervals for $\Pers(\Hm_{*}(\Fcal))$
from $\Pers(\Hm_\Dim(K/\Fcal))$ based on the weak duality theorem,
so that these intervals for $\Pers(\Hm_{*}(\Fcal))$
can be computed in near-linear time.
Our conclusion is as follows:

\begin{theorem}\label{thm:recover-F-from-KbyF}
Suppose that $\Fcal$ is non-repetitive.
Given $\Pers(\Hm_\Dim(K/\Fcal))$,
one can compute $\Pers(\Hm_\Dim(\Fcal))$,
and the closed-open, open-closed, and open-open intervals of 
$\Pers(\Hm_{\Dim-1}(\Fcal))$
in linear time.
Thus, the above subset of $\Pers(\Hm_*(\Fcal))$ can
be computed in
$O(\filtcnt\log^2 \simpcnt\allowbreak+\filtcnt\log \filtcnt)=
O(\simpcnt\log^2 \simpcnt)$ time, where $\filtcnt=2\simpcnt$.
\end{theorem}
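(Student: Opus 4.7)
The plan is to first invoke Theorem~\ref{thm:kbyf-dfilt-eq} to obtain $\Pers(\Hm_\Dim(K/\Fcal))$ within the target time bound, and then read off the required subset of $\Pers_*(\Fcal)$ in linear time by running the weak duality of Theorem~\ref{thm:dual} backwards, exploiting that $K$ is a $\Dim$-manifold.

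The key structural observation is that, since $K$ contains no $(\Dim+1)$-simplices, $\Hm_{\Dim+1}(K,L)=0$ for every subcomplex $L$ and hence $\Pers_{\Dim+1}(K/\Fcal)=\emptyset$. Inspecting the dimension shifts in the table of Theorem~\ref{thm:dual} shows that any closed-open, open-closed, or open-open interval of dimension $\Dim$ in $\Fcal$ would require a nonempty preimage in $\Pers_{\Dim+1}(K/\Fcal)$; therefore $\Pers_\Dim(\Fcal)$ consists exclusively of closed-closed intervals. Dually, every interval of $\Pers_\Dim(K/\Fcal)$ must arise from exactly one of four scenarios: a closed-open or open-closed $[b,d]_{\Dim-1}$ in $\Fcal$, which produces the interior interval $[b,d]$; a closed-closed $[b,d]_\Dim$ in $\Fcal$, which produces the boundary pair $[0,b-1],[d+1,\filtcnt]$; or an open-open $[b,d]_{\Dim-1}$ in $\Fcal$, which produces the boundary pair $[0,d],[b,\filtcnt]$.

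Using this classification, the recovery is immediate for interior intervals. For an interior $[b,d]$ with $0<b$ and $d<\filtcnt$, the directions of the inclusions $K_{b-1}\leftrightarrow K_b$ and $K_d\leftrightarrow K_{d+1}$ in $\Fcal$ coincide with those in $K/\Fcal$; a pair of forward arrows identifies $[b,d]$ as the image of a closed-open $[b,d]_{\Dim-1}$ in $\Fcal$, while a pair of backward arrows identifies it as the image of an open-closed $[b,d]_{\Dim-1}$. The remaining mixed cases cannot arise under the above classification. This pass therefore outputs all closed-open and open-closed intervals of $\Pers_{\Dim-1}(\Fcal)$ in $O(\filtcnt)$ time. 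For boundary intervals, we need to pair each $[0,k]$ with some $[l,\filtcnt]$: a pair with $l\leq k$ yields an open-open $[l,k]_{\Dim-1}$ in $\Fcal$, and a pair with $l\geq k+2$ yields a closed-closed $[k+1,l-1]_\Dim$ in $\Fcal$ (the case $l=k+1$ cannot occur).

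The main obstacle is precisely this pairing step, because the multi-set $\Pers_\Dim(K/\Fcal)$ alone typically admits several combinatorially valid pairings. To break the ambiguity, I would thread extra bookkeeping through the 0-dim graph zigzag computation of~\cite{dey2021graph} that underlies Theorem~\ref{thm:kbyf-dfilt-eq}. Each $[0,k]$ is the death of a class caused by a vertex deletion in $\dfilt$ (equivalently a $\Dim$-simplex addition in $\Fcal$), and each $[l,\filtcnt]$ is the birth of a class caused by a vertex addition in $\dfilt$ (equivalently a $\Dim$-simplex deletion in $\Fcal$); the spanning-forest data structure maintained in~\cite{dey2021graph} naturally links each killed component to the specific birth event whose class was carried along the component's trajectory. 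Exporting this link alongside the barcode yields the pairing in $O(\filtcnt)$ additional time, after which a single linear pass decides CC versus OO by comparing $l$ with $k$ and emits the corresponding interval of $\Fcal$. Combined with the $O(\filtcnt\log^2\simpcnt+\filtcnt\log\filtcnt)$ bound of Theorem~\ref{thm:kbyf-dfilt-eq} and $\filtcnt=2\simpcnt$, the total running time is $O(\simpcnt\log^2\simpcnt)$.
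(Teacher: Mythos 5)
Your setup and classification of intervals agree with the paper, and your argument that $\Pers_\Dim(\Fcal)$ contains only closed-closed intervals (via $\Hm_{\Dim+1}(K,L)=0$) and the recovery of interior intervals by arrow directions are both correct and essentially the same as the paper's. The gap is in your handling of the pairing of the boundary intervals $[0,k]$ and $[l,\filtcnt]$. You correctly identify this as the crux, but you then assert that the multiset $\Pers_\Dim(K/\Fcal)$ ``typically admits several combinatorially valid pairings'' and propose to resolve the ambiguity by exporting internal state from the spanning-forest structure of the graph zigzag algorithm. This misses the structural fact the paper exploits: since $\Hm_\Dim(K,K_0)=\Hm_\Dim(K,K_\filtcnt)=\Hm_\Dim(K)$ and $K$ is a $\Dim$-manifold, the number of intervals of $\Pers_\Dim(K/\Fcal)$ with birth at $0$ (resp.\ death at $\filtcnt$) equals the number of connected components of $K$, and each component contributes exactly one of each; moreover the destroyer $\sG_k$ (a $\Dim$-simplex added at index $k$) of a $[0,k]$ interval and the creator $\sG_{l-1}$ (a $\Dim$-simplex deleted) of an $[l,\filtcnt]$ interval lie in the same component of $K$ precisely when they should be paired, because the zigzag persistence of $\Fcal$ decomposes over components of $K$. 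Thus the pairing is \emph{canonical and unique}, determined by a single linear pass that bins each boundary interval by the component of $K$ containing its distinguished $\Dim$-simplex; no modification or re-opening of the graph-zigzag algorithm is needed.

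Beyond being unnecessary, your proposed workaround is not clearly correct as stated: the spanning-forest machinery in the 0-dimensional zigzag algorithm pairs the birth and death \emph{within} each barcode interval of $\dfilt$, not a $[0,k]$ interval with a separate $[l,\filtcnt]$ interval. Those two are distinct bars in the output, and the algorithm does not ``naturally link'' one bar to another bar; you would have to argue separately that whatever auxiliary link you track coincides with the component-based pairing, which is precisely the content you are trying to avoid proving. Replacing that step with the component-counting observation above closes the gap and matches the paper's argument.
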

\begin{remark}
The subset of $\Pers(\Hm_*(\Fcal))$ that we can recover
contain all those intervals derived from pairings involving $\Dim$-simplices. 
For example, an open-open interval of
$\Pers(\Hm_{\Dim-1}(\Fcal))$ is produced by pairing the deletion 
of a $\Dim$-simplex with the addition of another $\Dim$-simplex.
\end{remark}
\begin{proof}[Proof $($Sketch$)$]
(See Appendix~\ref{sec:pf-thm-recover-F-from-KbyF}
for the complete proof.)
By the weak duality theorem, we can directly recover all 
closed-open and open-closed intervals in $\Pers(\Hm_{\Dim-1}(\Fcal))$
from $\Pers(\Hm_\Dim(K/\Fcal))$.
To pair an interval $[0,d]\in\Pers_\Dim(K/\Fcal)$
with an interval $[b,\filtcnt]\in\Pers_\Dim(K/\Fcal)$ 
(see the discussions after Theorem~\ref{thm:dual} in Section~\ref{sec:duality}),
we note that the {\it destroyer} $\sG_d$ of $[0,d]$ 
and the {\it creator} $\sG_{b-1}$ of $[b,\filtcnt]$
(see Definition~\ref{dfn:creator-destroyer})
must come from the same connected component of $K$.
% because $K$ is a manifold. 
Also, there is exactly one
interval in $\Pers_\Dim(K/\Fcal)$
with birth at $0$ (death at $m$ resp.) per component of $K$ because
$\Hm_\Dim(K,K_0)=\Hm_\Dim(K)=\Hm_\Dim(K,K_m)$ has dimension equal to the
number of components in $K$.
We can then recover 
closed-closed intervals in $\Pers(\Hm_{\Dim}(\Fcal))$ 
and open-open intervals in $\Pers(\Hm_{\Dim-1}(\Fcal))$.
Note that 
$\Pers(\Hm_{\Dim}(\Fcal))$ contains only closed-closed intervals.
\end{proof}

\section{Non-repetitive zigzag via standard persistence}
\label{sec:comput-no-rept}

Proposition~\ref{prop:norep-filt-UD} presented in Section~\ref{sec:duality}
indicates that a {\it non-repetitive} filtration
admits an {\it up-down} filtration as its `canonical' form.
In this section,
utilizing this canonical form, we show that computing 
absolute barcodes (and hence relative barcodes
by duality) for non-repetitive zigzag filtrations 
can be reduced to computing barcodes
for certain {\it non-zigzag} filtrations.
This finding leads to a more efficient persistence algorithm 
for non-repetitive zigzag filtrations considering that standard (non-zigzag) persistence
admits faster algorithms~\cite{BDM15,CK13,de2011dualities} 
in practice, which we confirm with our experiments.
% \tamal{Separate sentence about the experiments read odd to me...it does not connect appropriately.}
 % than the zigzag version~\cite{carlsson2009zigzag-realvalue,maria2014zigzag}
%(though the time complexities are the same).
% \sout{The efficiency is confirmed by our experiments.}
Furthermore, the finding locates instances of zigzag filtrations
which make the barcode computation far more slower
than the non-zigzag ones. These filtrations are those
where simplices are repeatedly added and deleted.
% \sout{We hope that our finding deepens the understanding of
% zigzag persistence computing, which is a prerequisite for
% the design of more efficient algorithms.}}

% We now describe our algorithm.
% \paragraph{Algorithm.}
Throughout the section,
suppose that we are given a non-repetitive, simplex-wise filtration
\[\Fcal:
\emptyset=
K_0\leftrightarrowsp{\fsimp{}{0}} K_1\leftrightarrowsp{\fsimp{}{1}}
\cdots 
% K_{\filtcnt-1}
\leftrightarrowsp{\fsimp{}{\filtcnt-1}} K_\filtcnt
=\emptyset
\]
of a complex $K$.
Then,
let
\[\ud:\emptyset=\ucplx_0\inctosp{\usimp_0} 
% \ucplx_1\inctosp{\usimp{1}}
\cdots\inctosp{\usimp_{\simpcnt-1}} 
\ucplx_{\simpcnt}=K\bakinctosp{\usimp_{\simpcnt}}
% \ucplx_{\ell+1}\bakinctosp{\usimp{\ell+1}}
\cdots
% \bakinctosp{\usimp{2\ell-2}} \ucplx_{2\ell-1}
\bakinctosp{\usimp_{2\simpcnt-1}} \ucplx_{2\simpcnt}=\emptyset\]
be the up-down filtration for $\Fcal$ 
as described in Proposition~\ref{prop:norep-filt-UD},
where $\filtcnt=2\simpcnt$.
Utilizing Proposition~\ref{prop:norep-filt-UD} and
the Diamond Principle (Theorem~\ref{thm:diamond}),
we first relate intervals of $\Pers_*(\Fcal)$ to intervals of $\Pers_*(\ud)$.
Then, we relate $\Pers_*(\ud)$ to the barcode of an 
{\it extended persistence}~\cite{cohen2009extending} filtration
and draw upon the efficient algorithms for
non-zigzag persistence~\cite{BDM15,CK13,de2011dualities}. 

In summary, our main
tasks are:
\begin{itemize}
    \item Convert the given non-repetitive filtration to an up-down filtration.
    \item Convert the up-down filtration to a non-zigzag filtration 
    with the help of an extended persistence filtration
    and compute the standard persistence barcode.
    \item Convert the standard persistence barcode to 
    the barcode of the input filtration based on rules
    given in Proposition~\ref{prop:ud-f-map} and~\ref{prop:ef-ud-map}.
\end{itemize}

\subsection{Conversion to up-down filtration}
% We now describe the details of our algorithm.
In a simplex-wise zigzag filtration,
for a simplex $\sG$, let its addition be denoted as $\add{\sG}$
and its deletion be denoted as $\del{\sG}$.
From the proof of Proposition~\ref{prop:norep-filt-UD},
we observe the following: during the transition from $\Fcal$ to $\ud$,
for any two additions $\add{\sG}$, $\add{\sG'}$ in $\Fcal$ 
(and similarly for deletions),
if $\add{\sG}$ is before $\add{\sG'}$ in $\Fcal$,
then $\add{\sG}$ is also before $\add{\sG'}$ in $\ud$.
We then have the following fact:
\begin{fact}
Given the filtration $\Fcal$, to derive $\ud$, 
one only needs to scan $\Fcal$
and list all the additions first
and then the deletions, following the order in $\Fcal$.
\end{fact}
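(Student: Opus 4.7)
The plan is to establish the Fact directly from the observation stated just above it (itself a consequence of the proof of Proposition~\ref{prop:norep-filt-UD}), combined with the fact that $\ud$ is an up-down filtration. The argument is essentially a bookkeeping one, so I would keep it short.

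First, I would recall that by Proposition~\ref{prop:norep-filt-UD}, $\ud$ is obtained from $\Fcal$ by a sequence of inward switches, and that in $\ud$ all additions $\add{\usimp}$ precede all deletions $\del{\usimp}$. Each inward switch, as described in the proof of Proposition~\ref{prop:norep-filt-UD}, takes an adjacent pair consisting of a deletion followed by an addition (of distinct simplices, by non-repetitivity) and swaps them into an addition followed by a deletion. Crucially, such a switch only ever exchanges the relative positions of \emph{an addition and a deletion}; it never reorders two additions among themselves, nor two deletions among themselves.

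Next, I would conclude from this that throughout the entire sequence of inward switches carrying $\Fcal$ to $\ud$, the relative order of any two additions is preserved, and likewise the relative order of any two deletions. Hence the sequence of additions appearing in $\ud$ is exactly the sub-sequence of additions in $\Fcal$ read left-to-right, and the sequence of deletions appearing in $\ud$ is the sub-sequence of deletions in $\Fcal$ read left-to-right.

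Combining both observations yields the claim: since $\ud$ is an up-down filtration in which all additions precede all deletions, and since the internal orderings of additions and of deletions agree with those of $\Fcal$, the filtration $\ud$ can be produced directly by scanning $\Fcal$ once, outputting every addition in the order encountered, and then outputting every deletion in the order encountered. I do not anticipate a genuine obstacle here; the only care needed is to make explicit that an inward switch cannot permute two additions or two deletions, which is immediate from the shape of a Mayer--Vietoris diamond in Equation~(\ref{eqn:diamond}).
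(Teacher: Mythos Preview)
Your proposal is correct and follows exactly the reasoning the paper intends: the paper states the Fact as an immediate consequence of the observation (just preceding it) that inward switches never reorder two additions or two deletions among themselves, and you have simply spelled out this implication. The paper itself does not give a separate proof of the Fact, so your short argument matches its approach precisely.
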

\begin{remark}
Figure~\ref{fig:F-ud} gives an example of $\Fcal$ and its corresponding $\ud$,
where the additions and deletions in $\Fcal$ and $\ud$
follow the same order.
\end{remark}

\begin{definition}[Creator and destroyer]
\label{dfn:creator-destroyer}
For any interval $[b,d]\in\Pers_*(\Fcal)$, if $K_{b-1}\leftrightarrowsp{\fsimp_{b-1}}K_b$
is forward  (resp.\ backward), 
we call $\add{\fsimp_{b-1}}$ (resp.\ $\del{\fsimp_{b-1}}$) the {\it creator}
of $[b,d]$. 
Similarly, if $K_{d}\leftrightarrowsp{\fsimp_{d}}K_{d+1}$
is forward (resp.\ backward), 
we call $\add{\fsimp_{d}}$ (resp.\ $\del{\fsimp_{d}}$) the {\it destroyer}
of $[b,d]$.
\end{definition}

By inspecting the interval mapping in the Diamond Principle,
we have the following fact:
\begin{proposition}\label{prop:creat-destr-same}
For two simplex-wise filtrations $\Lcal,\Lcal'$ 
related by a Mayer-Vietoris diamond,
any two intervals of $\Pers_*(\Lcal)$ and $\Pers_*(\Lcal')$
mapped by the Diamond Principle
have the same set of creator and destroyer,
though the creator and destroyer may swap.
This implies that there is a bijection from $\Pers_*(\ud)$ to $\Pers_*(\Fcal)$
s.t.\ any two corresponding intervals
have the same set of creator and destroyer.
\end{proposition}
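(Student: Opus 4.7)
The plan is to prove the first sentence by a case analysis over the six cases of the Diamond Principle bijection (Theorem~\ref{thm:diamond}), and then deduce the second sentence by composing the bijections along the sequence of outward switches given by Proposition~\ref{prop:norep-filt-UD}.

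First I would fix notation. Writing $\Lcal$ and $\Lcal'$ in the form of Equation~(\ref{eqn:diamond}), so that in $\Lcal$ the local portion is $K_{j-1}\inctosp{\splx}K_j\bakinctosp{\tG}K_{j+1}$ while in $\Lcal'$ it is $K_{j-1}\bakinctosp{\tG}K'_j\inctosp{\splx}K_{j+1}$. The filtrations agree at every other index, so for any index $i\notin\{j-1,j\}$ the arrow at position $i$ is the same in both, and hence any endpoint at such a position trivially shares the same $\add{\cdot}/\del{\cdot}$ label. What remains is to check each of the six cases in Theorem~\ref{thm:diamond} where an endpoint touches $\{j-1,j,j+1\}$.

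The routine cases are the four where only one endpoint sits in the diamond. For $[b,j-1]\mapsto[b,j]$ with $b\leq j-1$, the destroyer in $\Lcal$ is read from $K_{j-1}\to K_j$ (adding $\splx$), and the destroyer in $\Lcal'$ is read from $K'_j\to K_{j+1}$ (also adding $\splx$); both are $\add{\splx}$, and the creator is unchanged. The case $[b,j]\mapsto[b,j-1]$ gives destroyer $\del{\tG}$ on both sides. Symmetrically, $[j,d]\mapsto[j+1,d]$ gives creator $\add{\splx}$ on both sides, and $[j+1,d]\mapsto[j,d]$ gives creator $\del{\tG}$ on both sides. The interesting case is $[j,j]\mapsto[j,j]$ (with the dimension shift by one): in $\Lcal$ the creator is $\add{\splx}$ and the destroyer is $\del{\tG}$, while in $\Lcal'$ the creator is $\del{\tG}$ and the destroyer is $\add{\splx}$. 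So the \emph{set} $\{\add{\splx},\del{\tG}\}$ is preserved, with creator and destroyer swapped; this is precisely the ``may swap'' clause in the statement. All remaining intervals map to themselves with no change, so there is nothing to check.

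For the second sentence, apply Proposition~\ref{prop:norep-filt-UD} to obtain a sequence $\ud=\Fcal^0,\Fcal^1,\ldots,\Fcal^k=\Fcal$ in which each consecutive pair is related by a Mayer-Vietoris diamond (an outward switch). Composing the bijections given by the Diamond Principle yields a bijection $\Pers_*(\ud)\to\Pers_*(\Fcal)$, and by what was just proved the creator-destroyer set is preserved at each step, hence along the composition. The main obstacle here is simply being careful with bookkeeping in the $[j,j]$ case, where although the roles swap within a single step, repeated swaps along the chain still leave the underlying two-element set invariant; no calculation beyond tracking this set is required.
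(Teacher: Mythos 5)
Your proposal is correct and takes essentially the same route as the paper, which simply asserts the result "by inspecting the interval mapping in the Diamond Principle" and isolates the $[j,j]$ swap in a separate remark. You have just carried out that inspection explicitly across the six cases of Theorem~\ref{thm:diamond} and then composed along the chain of switches from Proposition~\ref{prop:norep-filt-UD}, which is exactly what the paper intends.
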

\begin{remark}
The only time when the creator and destroyer swap
in a Mayer-Vietoris diamond
is when the interval $[j,j]$ for the upper filtration in Equation~(\ref{eqn:diamond}) 
turns into the same interval (of one dimension lower) for the lower filtration.
\label{rem:critical}
\end{remark}

% \begin{remark}
Consider the example in Figure~\ref{fig:F-ud}
for an illustration of Proposition~\ref{prop:creat-destr-same}.
In the example,
$[1,2]\in\Pers_1(\Fcal)$ corresponds to $[1,4]\in\Pers_1(\ud)$, 
where their creator is $\add{\Set{a,d}}$ and their destroyer is $\del{\Set{a,b}}$.
Moreover, $[4,6]\in\Pers_0(\Fcal)$ corresponds to $[4,5]\in\Pers_1(\ud)$.
The creator of $[4,6]\in\Pers_0(\Fcal)$ is $\del{\Set{b,d}}$
and the destroyer is $\add{\Set{b,c}}$.
Meanwhile, $[4,5]\in\Pers_1(\ud)$ has the same set of creator and destroyer
but the roles swap.
% \end{remark}

\begin{figure}
  \centering
  \includegraphics[width=0.9\linewidth]{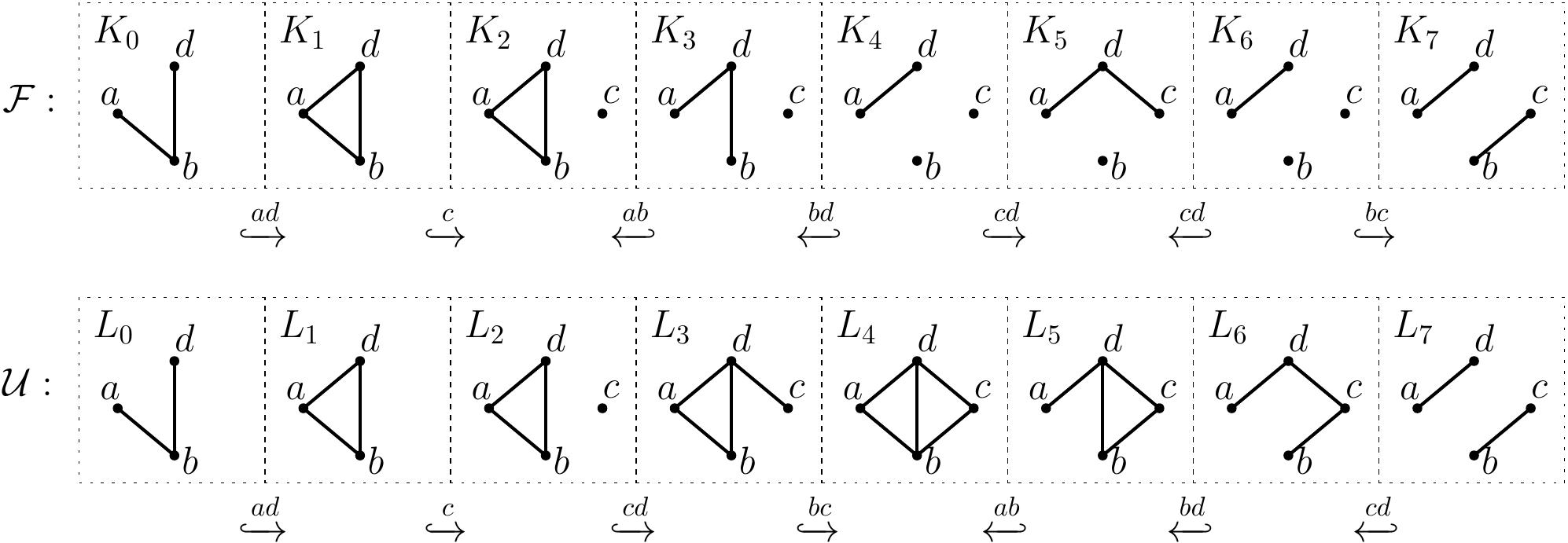}
  \caption{A non-repetitive filtration $\Fcal$ and its corresponding up-down filtration $\ud$.
  For brevity, $\Fcal$ does not start and end with empty complexes
  (which can be treated as a truncated case).}
  \label{fig:F-ud}
\end{figure}

For any $\add{\sG}$ or $\del{\sG}$ in $\Fcal$,
 % (similarly for $\ud$), 
let $\id_\Fcal(\add{\sG})$ or $\id_\Fcal(\del{\sG})$ 
denote the index (position) of the addition or deletion.
For example, for an addition $K_i\inctosp{\fsimp{}{i}}K_{i+1}$ in $\Fcal$,
$\id_\Fcal(\add{\fsimp{}{i}})=i$.
Proposition~\ref{prop:creat-destr-same} indicates
the following explicit mapping from $\Pers_*(\ud)$ to $\Pers_*(\Fcal)$:
\begin{proposition}\label{prop:ud-f-map}
There is a bijection from $\Pers_*(\ud)$ to $\Pers_*(\Fcal)$
which maps each $[b,d]\in\Pers_\Dim(\ud)$
by the following rule{\rm:}
{\rm \begin{center}
\begin{tabular}{cccccl}
% \toprule
\midrule
\makecell{Type} & Condition & & Type & Interval in $\Pers_*(\Fcal)$ & Dim  \\
\midrule
closed-open & - & $\mapsto$ & closed-open &
  $\big[\id_\Fcal(\add{\usimp_{b-1}})+1,\id_\Fcal({\add{\usimp_d}})\big]$ &
  $\Dim$ \\ 
\cmidrule{1-6}
open-closed & - & $\mapsto$ & open-closed & 
  $\big[\id_\Fcal(\del{\usimp_{b-1}})+1,\id_\Fcal({\del{\usimp_d}})\big]$ &
  $\Dim$ \\ 
\cmidrule{1-6}
\multirow{2}{*}{closed-closed} &
  $\id_\Fcal(\add{\usimp_{b-1}})<\id_\Fcal({\del{\usimp_d}})$ &
  $\mapsto$ & closed-closed &
  $\big[\id_\Fcal(\add{\usimp_{b-1}})+1,\id_\Fcal({\del{\usimp_d}})\big]$ &
  $\Dim$ \\
& $\id_\Fcal(\add{\usimp_{b-1}})>\id_\Fcal({\del{\usimp_d}})$ &
  $\mapsto$ & open-open &
  $\big[\id_\Fcal({\del{\usimp_d}})+1,\id_\Fcal(\add{\usimp_{b-1}})\big]$ &
  $\Dim-1$ \\
% \bottomrule
\midrule
\end{tabular}
\end{center}}
\end{proposition}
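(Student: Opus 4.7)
The plan is to derive Proposition~\ref{prop:ud-f-map} directly from Proposition~\ref{prop:creat-destr-same}, which already supplies a bijection $\Pers_*(\ud)\to\Pers_*(\Fcal)$ preserving the unordered creator/destroyer pair of each interval (with possibly swapped roles). What remains is to read off, for each possible input type of interval in $\Pers_*(\ud)$, the endpoints and type of the image interval in $\Fcal$ in terms of $\id_\Fcal$.

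First I would unpack what the creator and destroyer look like in the up-down filtration $\ud$. Since the first $\simpcnt$ arrows of $\ud$ are additions and the last $\simpcnt$ are deletions, Definition~\ref{dfn:open-close-bd} forces the following: a closed-open interval $[b,d]\in\Pers_*(\ud)$ is born and dies among additions, so its creator is $\add{\usimp_{b-1}}$ and its destroyer is $\add{\usimp_d}$; an open-closed interval is born and dies among deletions, giving creator $\del{\usimp_{b-1}}$ and destroyer $\del{\usimp_d}$; a closed-closed interval must span the fold, so its creator is $\add{\usimp_{b-1}}$ and its destroyer is $\del{\usimp_d}$. No open-open intervals occur in $\Pers_*(\ud)$ because there are no backward arrows in the first half of $\ud$ nor forward arrows in the second half, so only three rows of the table need to be checked.

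Next I would transfer these creator/destroyer pairs back to $\Fcal$. The sequence of inward switches used in the proof of Proposition~\ref{prop:norep-filt-UD} only transposes an $\add{\sG}$ against a $\del{\sG'}$ with $\sG\neq\sG'$; in particular it never swaps two additions with each other or two deletions with each other. Hence the relative order among additions (resp.\ among deletions) in $\Fcal$ agrees with that in $\ud$. For the closed-open and open-closed rows the creator precedes the destroyer in $\ud$ and therefore also in $\Fcal$, so the image interval is $\bigl[\id_\Fcal(\add{\usimp_{b-1}})+1,\id_\Fcal(\add{\usimp_d})\bigr]$ or $\bigl[\id_\Fcal(\del{\usimp_{b-1}})+1,\id_\Fcal(\del{\usimp_d})\bigr]$ respectively; the arrows flanking these endpoints are additions on both sides (resp.\ deletions on both sides) in $\Fcal$, so by Remark~\ref{rmk:inc-bd} the types closed-open and open-closed are preserved, as is the dimension $\Dim$.

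The closed-closed case is the only subtle one, and it is where Remark~\ref{rem:critical} is used. The creator/destroyer pair is $\{\add{\usimp_{b-1}},\del{\usimp_d}\}$, but in $\Fcal$ these two operations may appear in either order. If $\id_\Fcal(\add{\usimp_{b-1}})<\id_\Fcal(\del{\usimp_d})$ the addition remains the creator and the deletion remains the destroyer in $\Fcal$, giving a closed-closed interval $\bigl[\id_\Fcal(\add{\usimp_{b-1}})+1,\id_\Fcal(\del{\usimp_d})\bigr]$ of dimension $\Dim$. If instead $\id_\Fcal(\add{\usimp_{b-1}})>\id_\Fcal(\del{\usimp_d})$, the roles must have swapped at least once along the chain of outward switches $\ud=\Fcal^0,\Fcal^1,\ldots,\Fcal^k=\Fcal$. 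By Remark~\ref{rem:critical} the only way a swap occurs in a Mayer-Vietoris diamond is through the $[j,j]\mapsto[j,j]$ transition that drops dimension by one; the endpoint types then flip so that the birth lies just after a backward arrow and the death just before a forward arrow, yielding the open-open interval $\bigl[\id_\Fcal(\del{\usimp_d})+1,\id_\Fcal(\add{\usimp_{b-1}})\bigr]$ in dimension $\Dim-1$, exactly as claimed. The main obstacle I expect is the clean bookkeeping of this swap: one must verify by induction along the switch sequence, using Proposition~\ref{prop:creat-destr-same} at each step, that the roles swap at most once per interval and do so iff the two $\id_\Fcal$ values are in reversed order, after which the assignment of type and dimension is forced.
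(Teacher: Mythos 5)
Your proof is correct and follows the paper's implicit route: the paper presents Proposition~\ref{prop:ud-f-map} as a direct reading of Proposition~\ref{prop:creat-destr-same} together with Remark~\ref{rem:critical}, and you have simply spelled that reading out (classifying creator/destroyer pairs in $\ud$ by interval type, noting additions and deletions separately keep their relative order, and observing that the swap condition is exactly $\id_\Fcal(\add{\usimp_{b-1}})>\id_\Fcal(\del{\usimp_d})$). The one bookkeeping point you leave open — that a creator/destroyer swap occurs at most once along the outward-switch sequence — closes immediately: after a swap the two events sit locally as deletion-then-addition, which is the \emph{output} rather than the \emph{input} configuration of an outward switch, so no further outward switch can undo it.
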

% \tamal{I think it will be nice to have the Type column for the bars in $\Pers_*{\Fcal}$.}
% \tamal{Tao, I see that you use `open-open $[b,d]$'...please use `open-open interval $[b,d]$' instead.}
\begin{remark}
As mentioned in the proof of Theorem~\ref{thm:dual},
$\Pers_*(\ud)$ contains no open-open intervals. 
However, a closed-closed interval $[b,d]\in\Pers_\Dim(\ud)$ turns into an open-open
interval in $\Pers_{\Dim-1}(\Fcal)$ when $\id_\Fcal(\add{\usimp_{b-1}})>\id_\Fcal({\del{\usimp_d}})$.
% \tamal{Such an interval going through the diamond switches becomes
% a single point interval $[j,j]$ at some stage when it switches
% from closed-closed type to open-open type with a shift in dimension.}
The type of such a closed-closed interval
changes 
% happens 
when it
turns into a single point interval $[j,j]$ 
during the sequence of outward switches,
% After the next outward switch, 
after which
the closed-closed interval $[j,j]$ becomes an open-open interval
$[j,j]$ with a dimension shift (see Theorem~\ref{thm:diamond}).
% \sout{Also note the change of dimension in the above scenario,
% which follows from the Diamond Principle.} 
\end{remark}
% \begin{proof}
% The mapping of intervals follows directly from Proposition~\ref{prop:creat-destr-same}.
% \end{proof}

We can take the example in Figure~\ref{fig:F-ud} for the mapping in
Proposition~\ref{prop:ud-f-map}. 
The interval $[4,5]\in\Pers_1(\ud)$ is a closed-closed one
whose creator is $\add{\Set{b,c}}$ and destroyer is $\del{\Set{b,d}}$.
We have that $\id_\Fcal(\add{\Set{b,c}})=6>\id_\Fcal(\del{\Set{b,d}})=3$,
% so that $[4,5]\in\Pers_1(\ud)$ maps to 
so that the corresponding interval in $\Pers_0(\Fcal)$ is
% $
\[
[\id_\Fcal(\del{\Set{b,d}})+1,\id_\Fcal(\add{\Set{b,c}})]=[4,6]
\]
% $.

\subsection{Conversion to non-zigzag filtration}
We first convert the up-down filtration $\ud$ to
an extended persistence~\cite{cohen2009extending} filtration $\ef$ which
is then easily converted to an (absolute) non-zigzag filtration using the `coning' technique~\cite{cohen2009extending}.

Inspired by the Mayer-Vietoris pyramid in~\cite{carlsson2009zigzag-realvalue},
we relate $\Pers_*(\ud)$ to the barcode of the filtration $\ef$
 defined as:
% \[\ud:\emptyset=\ucplx_0\inctosp{\usimp_0} 
% % \ucplx_1\inctosp{\usimp{1}}
% \cdots\inctosp{\usimp_{\ell-1}} 
% \ucplx_{\ell}=K\bakinctosp{\usimp_{\ell}}
% % \ucplx_{\ell+1}\bakinctosp{\usimp{\ell+1}}
% \cdots
% \bakinctosp{\usimp_{2\ell-2}} \ucplx_{2\ell-1}
% \bakinctosp{\usimp_{2\ell-1}} \ucplx_{2\ell}=\emptyset\]
\[\ef:\emptyset=\ucplx_0
% \inctosp{\usimp_0} 
\incto
% \ucplx_1\inctosp{\usimp{1}}
\cdots
% \inctosp{\usimp_{\ell-1}} 
\incto
\ucplx_{\simpcnt}=(K,\ucplx_{2\simpcnt})
% \inctosp{,\usimp_{2\ell-1}}
\incto
(K,\ucplx_{2\simpcnt-1})
% \inctosp{,\usimp_{2\ell-2}}
\incto
\cdots
% \inctosp{,\usimp_{\ell}} 
\incto
(K,\ucplx_{\simpcnt})=(K,K)\]
where $\ucplx_{\simpcnt}=K=(K,\ucplx_{2\simpcnt}=\emptyset)$.
% and $(K,\ucplx_{\simpcnt})=(K,K)$.
When denoting the persistence intervals of $\ef$,
we let the increasing index for the first half of $\ef$ continue to the second half,
% continuing from the first half,
i.e., $(K,L_{2\simpcnt-1})$ has index $\simpcnt+1$
and $(K,L_{\simpcnt})$ has index $2\simpcnt$.
Then, it can be verified that 
an interval $[b,d]\in\Pers_*(\ef)$ for $b<\simpcnt<d$
starts with the complex $L_b$ and ends with $(K,L_{3\simpcnt-d})$.

\begin{remark}
A filtration in extended persistence~\cite{cohen2009extending} 
is originally defined for a PL function $f$, where the first half
is the lower-star filtration of $f$
and the second half (the relative part)
is derived from the upper-star filtration of $f$.
The filtration $\ef$ defined above is a generalization of the one in~\cite{cohen2009extending}.
\end{remark}

\begin{proposition}\label{prop:ef-ud-map}
There is a bijection from $\Pers_*(\ef)$ to $\Pers_*(\ud)$
which maps each $[b,d]\in\Pers_*(\ef)$ of dimension $\Dim$
by the following rule{\rm:}
{\rm \begin{center}
\begin{tabular}{llclll}
% \toprule
\midrule
Type & Condition & & Type & Interv. in $\Pers_*(\ud)$ & Dim  \\
\midrule
Ord &
$d<\simpcnt$ & $\mapsto$ & 
  closed-open &
  $[b,d]$ &
  $\Dim$ \\ 
% \cmidrule{1-5}
Rel &
$b>\simpcnt$ & $\mapsto$ & 
  open-closed &
  $[3\simpcnt-d,3\simpcnt-b]$ &
  $\Dim-1$ \\ 
% \cmidrule{1-5}
Ext &
$b\leq\simpcnt\leq d$ & $\mapsto$ & 
  closed-closed &
  $[b,3\simpcnt-d-1]$ &
  $\Dim$ \\
% \bottomrule
\midrule
\end{tabular}
\end{center}}
\end{proposition}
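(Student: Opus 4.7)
The plan is to adapt the Mayer-Vietoris pyramid of Carlsson--de Silva--Morozov \cite{carlsson2009zigzag-realvalue} and verify the three cases using the long exact sequence of the pair $(K, \ucplx_j)$ for $j \in [\simpcnt, 2\simpcnt]$, namely
\begin{equation*}
\cdots \to \Hm_\Dim(\ucplx_j) \to \Hm_\Dim(K) \to \Hm_\Dim(K, \ucplx_j) \xrightarrow{\partial} \Hm_{\Dim-1}(\ucplx_j) \to \Hm_{\Dim-1}(K) \to \cdots.
\end{equation*}

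For the Ord case, I would observe that the first halves of $\ud$ and $\ef$ coincide as the filtration $\emptyset = \ucplx_0 \hookrightarrow \cdots \hookrightarrow \ucplx_\simpcnt = K$, so an interval $[b,d] \in \Pers_\Dim(\ef)$ with $d < \simpcnt$ descends unchanged to $[b,d] \in \Pers_\Dim(\ud)$; its endpoints are closed-open by Definition~\ref{dfn:open-close-bd} since the neighboring arrows are all forward. For the Ext case with $b \leq \simpcnt \leq d$, I would argue that the class is born in $\Hm_\Dim(\ucplx_b)$, survives into $\Hm_\Dim(K)$, and by exactness first lies in the image of $\Hm_\Dim(\ucplx_{3\simpcnt-d-1}) \to \Hm_\Dim(K)$ but not in the image of $\Hm_\Dim(\ucplx_{3\simpcnt-d}) \to \Hm_\Dim(K)$. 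The same class then witnesses a closed-closed interval $[b, 3\simpcnt-d-1]$ in $\Pers_\Dim(\ud)$, born via a forward arrow and dying via a backward one.

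For the Rel case, I would transport an interval $[b,d] \in \Pers_\Dim(\ef)$ with $b > \simpcnt$ through the connecting homomorphism $\partial$. Since the birth class is not in the image of $\Hm_\Dim(K)$ (that situation is already covered by the Ext case), $\partial$ sends it to a class in $\Hm_{\Dim-1}(\ucplx_{3\simpcnt-b})$ that is not in the image of $\Hm_{\Dim-1}(\ucplx_{3\simpcnt-b+1})$ and that vanishes under the map into $\Hm_{\Dim-1}(\ucplx_{3\simpcnt-d-1})$. Naturality of $\partial$ under the inclusions $\ucplx_{j+1} \hookrightarrow \ucplx_j$ then yields an open-closed interval $[3\simpcnt - d, 3\simpcnt - b] \in \Pers_{\Dim-1}(\ud)$; the index reversal $i \mapsto 3\simpcnt - i$ converts the increasing $\ef$-indexing into the decreasing $\ud$-indexing on its down-part. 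The reverse direction (that every open-closed interval of $\ud$ arises this way) follows from the standard non-zigzag absolute/relative duality of~\cite{de2011dualities} applied to the reversed second half $\emptyset = \ucplx_{2\simpcnt} \subset \cdots \subset \ucplx_\simpcnt = K$.

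The hardest part is the Rel case: one must simultaneously track the dimension shift $\Dim \mapsto \Dim - 1$ induced by $\partial$, the index reversal $i \mapsto 3\simpcnt - i$, and the fact that essential classes of $\Hm_*(K)$ have already been used up by the Ext case and must be factored out before $\partial$ yields a bijection on the remaining intervals. One can bypass these subtleties by invoking the full Mayer-Vietoris pyramid machinery, which produces all three interval-mapping rules simultaneously from a single diagram chase; either route should suffice to establish the claimed bijection.
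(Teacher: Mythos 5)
Your plan and the paper's proof ultimately converge on the same device — the Mayer-Vietoris pyramid from~\cite{carlsson2009zigzag-realvalue} relating the second halves of $\ef$ and $\ud$ — but the per-case long-exact-sequence chase you put in front of it has a genuine gap. Zigzag barcodes are module-level invariants; tracking a single representative class through connecting homomorphisms gives intuition for where each interval ``should'' land, and your predicted index formulas are correct, but it does not by itself prove that the zigzag modules decompose in the claimed way. Concretely, in the Rel case you assert that $\partial$ sends the birth class to an element of $\Hm_{\Dim-1}(\ucplx_{3\simpcnt-b})$ that is \emph{not} in the image of $\Hm_{\Dim-1}(\ucplx_{3\simpcnt-b+1})\to\Hm_{\Dim-1}(\ucplx_{3\simpcnt-b})$. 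That does not follow from the naturality ladder of the pair inclusions together with the two hypotheses you have in hand ($\alpha$ is not in the image from $\Hm_\Dim(K,\ucplx_{3\simpcnt-b+1})$ and not in the image from $\Hm_\Dim(K)$); in general $\partial(\alpha)$ could still come from the smaller complex via a different preimage, and ruling this out requires a rank or direct-sum argument rather than a single-class chase. The Ord case also needs a small lemma that intervals of $\Pers_*(\ef)$ with $d<\simpcnt$ coincide with those of the restriction to indices $0,\dots,\simpcnt$ (and likewise for $\ud$), which you state but do not justify.

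The paper sidesteps all of this by chasing each square of the pyramid with the Diamond Principle (Theorem~\ref{thm:diamond}), which is already a theorem about module decompositions and hence does not suffer the above subtlety, and then reads off the explicit formulas and the $\Dim\mapsto\Dim-1$ shift in the Rel case from the observation that a diamond switch preserves the pair of creator and destroyer simplices, merely swapping their roles when the interval collapses to $[j,j]$ (Proposition~\ref{prop:creat-destr-same} and Remark~\ref{rem:critical}). You would do well to promote your ``fallback'' — invoking the pyramid and Diamond Principle — to the primary argument, and treat the long-exact-sequence reasoning as a sanity check on the resulting formulas rather than as the proof. Your use of~\cite{de2011dualities} for surjectivity in the Rel case is a nice observation and is consistent with how the paper handles the up-down base case in the proof of Theorem~\ref{thm:dual}, but it is not needed once the diamond argument is in place, since that already gives a bijection.
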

\begin{remark}
The types `Ord', `Rel', and `Ext' for intervals in $\Pers_*(\ef)$
are as defined in~\cite{cohen2009extending},
which stand for intervals from the {\it ordinary} sub-barcode,
the {\it relative} sub-barcode,
and the {\it extended} sub-barcode.
\end{remark}

\begin{figure}
\centering
% \[
\begin{tikzpicture}[xscale=2.5,yscale=1.5]
\draw (0,0) node(48) {$(L_4,L_8)$} ;
\draw (1,0) node(58) {$(L_5,L_8)$} ;
\draw (2,0) node(68) {$(L_6,L_8)$} ;
\draw (3,0) node(78) {$(L_7,L_8)$} ;
\draw (4,0) node(88) {$(L_8,L_8)$} ;

\draw (0,1) node(47) {$(L_4,L_7)$} ;
\draw (1,1) node(57) {$(L_5,L_7)$} ;
\draw (2,1) node(67) {$(L_6,L_7)$} ;
\draw (3,1) node(77) {$(L_7,L_7)$} ;

\draw (0,2) node(46) {$(L_4,L_6)$} ;
\draw (1,2) node(56) {$(L_5,L_6)$} ;
\draw (2,2) node(66) {$(L_6,L_6)$} ;

\draw (0,3) node(45) {$(L_4,L_5)$} ;
\draw (1,3) node(55) {$(L_5,L_5)$} ;

\draw (0,4) node(44) {$(L_4,L_4)$} ;

%----
\draw[<-] (48) edge node[above]{$\usimp_4$} (58);
\draw[<-] (58) edge node[above]{$\usimp_5$} (68);
\draw[<-] (68) edge node[above]{$\usimp_6$} (78);
\draw[<-] (78) edge node[above]{$\usimp_7$} (88);

\draw[<-] (47) edge node[above]{$\usimp_4$} (57);
\draw[<-] (57) edge node[above]{$\usimp_5$} (67);
\draw[<-] (67) edge node[above]{$\usimp_6$} (77);

\draw[<-] (46) edge node[above]{$\usimp_4$} (56);
\draw[<-] (56) edge node[above]{$\usimp_5$} (66);

\draw[<-] (45) edge node[above]{$\usimp_4$} (55);

\draw[->] (48) edge node[left]{$\usimp_7$} (47);
\draw[->] (47) edge node[left]{$\usimp_6$} (46);
\draw[->] (46) edge node[left]{$\usimp_5$} (45);
\draw[->] (45) edge node[left]{$\usimp_4$} (44);

\draw[->] (58) edge node[left]{$\usimp_7$} (57);
\draw[->] (57) edge node[left]{$\usimp_6$} (56);
\draw[->] (56) edge node[left]{$\usimp_5$} (55);

\draw[->] (68) edge node[left]{$\usimp_7$} (67);
\draw[->] (67) edge node[left]{$\usimp_6$} (66);

\draw[->] (78) edge node[left]{$\usimp_7$} (77);

\draw [decorate,decoration={brace,amplitude=10pt},xshift=-4pt,yshift=0pt]
(-0.25,-0.1) --
node [midway,above,sloped,yshift=10pt] {\rotatebox{-90}{$\ef$}}
(-0.25,4.1);

\draw [decorate,decoration={brace,amplitude=10pt,mirror},xshift=-4pt,yshift=0pt]
(-0.1,-0.3) -- 
node [midway,below,yshift=-10pt] {$\ud$}
(4.4,-0.3);
\end{tikzpicture}
% \]
\caption{A Mayer-Vietoris pyramid relating the second half 
of $\ef$ and $\ud$ for $\simpcnt=4$.}
\label{fig:pyramid}
\end{figure}

\begin{proof}[Proof $($Sketch$)$]
We can build a Mayer-Vietoris pyramid relating the second half 
of $\ef$ and the second half of $\ud$
similar to the one in~\cite{carlsson2009zigzag-realvalue}.
A pyramid for $\simpcnt=4$ is shown in Figure~\ref{fig:pyramid},
where the second half of $\ef$ is along the left side of the triangle
and the second half of $\ud$ is along the bottom.
In Figure~\ref{fig:pyramid}, we represent the second half of $\ef$ and $\ud$
in a slightly different way
considering that $L_4=K$ and $L_8=\emptyset$.
Also, each vertical arrow indicates
the addition of a simplex in the second complex of the pair
and each horizontal arrow indicates
the deletion of a simplex in the first complex.

To see the correctness of the mapping,
we first note that
each square in the pyramid is a Mayer-Vietoris diamond.
% Moreover, similarly to Proposition~\ref{prop:creat-destr-same},
% the set of creator and destroyer for each pair of corresponding intervals
% in the Diamond Principle
% stay the same.
Then, 
% using Proposition~\ref{prop:creat-destr-same},
the mapping as stated can be verified
using the Diamond Principle (Theorem~\ref{thm:diamond}).
However,
there is a quicker way to verify the mapping in the proposition
by observing the following: 
corresponding intervals in $\Pers_*(\ef)$ and $\Pers_*(\ud)$
have the same set of creator and destroyer if we
ignore whether it is the addition or deletion of a simplex.
% and which complex in the {\red pair} $(L_i,L_j)$ changes.
For example, an interval in $\Pers_*(\ef)$ may be created by the addition 
of a simplex $\sG$ in the first half of $\ef$ and 
destroyed by the addition of another simplex $\sG'$ in the second half of $\ef$.
Then, its corresponding interval in $\Pers_*(\ud)$ is 
also created by the addition of $\sG$ in the first half but 
destroyed by the {\it deletion} of $\sG'$ in the second half.
Note that the dimension change for the case $b>\simpcnt$ is caused by
the swap of creator and destroyer.
\end{proof}

% \paragraph{Connection with non-zigzag filtration.}

By Proposition~\ref{prop:ud-f-map} and~\ref{prop:ef-ud-map}, we
only need to compute $\Pers_*(\ef)$
in order to compute $\Pers_*(\Fcal)$.
The barcode of $\ef$ can be computed using the `coning' technique~\cite{cohen2009extending},
which converts $\ef$ into an (absolute) non-zigzag filtration $\hat{\ef}$.
Specifically, let $\oG$ be a vertex different from all vertices in $K$.
The {\it cone} $\oG\cdot\sG$ of a simplex $\sG$ of $K$ 
is the simplex $\sG\union\Set{\oG}$.
The cone $\oG\cdot L_i$ of a complex $L_i$ consists of three parts: 
the vertex $\oG$, $L_i$, and cones of all simplices of $L_i$.
The filtration $\hat{\ef}$ is then defined as~\cite{cohen2009extending}:
\[\hat{\ef}:
% \Set{\oG}=
\ucplx_0\union \Set{\oG}
% \inctosp{\usimp_0} 
\incto
% \ucplx_1\inctosp{\usimp{1}}
\cdots
% \inctosp{\usimp_{\ell-1}} 
\incto
\ucplx_{\simpcnt}\union \Set{\oG}
=K\union\oG\cdot\ucplx_{2\simpcnt}
% (K,\ucplx_{2\simpcnt})
% \inctosp{,\usimp_{2\ell-1}}
\incto
K\union\oG\cdot\ucplx_{2\simpcnt-1}
% (K,\ucplx_{2\simpcnt-1})
% \inctosp{,\usimp_{2\ell-2}}
\incto
\cdots
% \inctosp{,\usimp_{\ell}} 
\incto
K\union\oG\cdot\ucplx_{\simpcnt}
% (K,\ucplx_{2\simpcnt-1})
\]
We have that $\Pers_*(\ef)$ equals $\Pers_*(\hat{\ef})$
without the only infinite interval~\cite{cohen2009extending}.
Note that if a simplex $\sG$ is added (to the second complex) 
from $(K,\ucplx_{i})$ to $(K,\ucplx_{i-1})$ in $\ef$,
then the cone $\oG\cdot\sG$ is added 
from $K\union\oG\cdot\ucplx_{i}$ to $K\union\oG\cdot\ucplx_{i-1}$ in $\hat{\ef}$.

\bigskip
% \paragraph{Conclusion.} 
% \paragraph{} 
We now have the conclusion of this section:
\begin{theorem}
The barcode of a non-repetitive, simplex-wise zigzag filtration 
with length $\filtcnt$ can be computed in time $T(\filtcnt)+O(\filtcnt)$,
where $T(\filtcnt)$ is the time used for computing the barcode
of a non-zigzag filtration with length $\filtcnt$.
\end{theorem}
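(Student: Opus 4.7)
The plan is to chain together the constructions developed earlier in the section and carefully account for each step's running time. Suppose we are given a non-repetitive simplex-wise filtration $\Fcal$ of length $\filtcnt$. The first step is to convert $\Fcal$ into its up-down form $\ud$: by the Fact preceding Definition~\ref{dfn:creator-destroyer}, this requires only a single scan of $\Fcal$ collecting all additions (in order) followed by all deletions (in order), which is an $O(\filtcnt)$ operation. The resulting $\ud$ has the same length $\filtcnt=2\simpcnt$ as $\Fcal$, and we can simultaneously record a table giving $\id_\Fcal(\add{\usimp_i})$ and $\id_\Fcal(\del{\usimp_i})$ for each $i$, to be used in the final mapping step.

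Next, I would build the extended persistence filtration $\ef$ and then its coned, non-zigzag counterpart $\hat{\ef}$ as described at the end of the section. The filtration $\ef$ uses the same simplices as $\ud$ (just re-expressed as relative pairs in the second half), and so the coning procedure introduces exactly one new simplex $\oG\cdot\usimp_i$ for each deletion $\usimp_i$ in the second half of $\ud$, plus the vertex $\oG$. Thus $\hat{\ef}$ is a non-zigzag filtration of length $2\simpcnt+1 = \filtcnt+1$, and it can be constructed from $\ud$ in $O(\filtcnt)$ time. Running any algorithm for standard persistence on $\hat{\ef}$ then costs $T(\filtcnt+1)$, which we may absorb into $T(\filtcnt)$ up to the additive $O(\filtcnt)$ term.

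The final step is to push the barcode back through the pipeline. By the fact stated after the definition of $\hat{\ef}$, $\Pers_*(\ef)$ is obtained from $\Pers_*(\hat{\ef})$ by discarding the unique infinite interval; this is $O(\filtcnt)$. Proposition~\ref{prop:ef-ud-map} then converts $\Pers_*(\ef)$ into $\Pers_*(\ud)$ by applying a per-interval transformation that looks only at the endpoints, their types, and $\simpcnt$, so again $O(\filtcnt)$ suffices. Finally, Proposition~\ref{prop:ud-f-map} converts $\Pers_*(\ud)$ into $\Pers_*(\Fcal)$, where each interval $[b,d]\in\Pers_*(\ud)$ is mapped in $O(1)$ time using the precomputed tables for $\id_\Fcal(\add{\usimp_{b-1}})$, $\id_\Fcal(\del{\usimp_{b-1}})$, $\id_\Fcal(\add{\usimp_d})$, and $\id_\Fcal(\del{\usimp_d})$.

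The only point that requires mild care is verifying that the length of $\hat{\ef}$ remains linear in $\filtcnt$ (so that calling the standard-persistence algorithm with parameter $\filtcnt$ is legitimate) and that the number of intervals in each intermediate barcode is $O(\filtcnt)$ (so that the conversion steps are genuinely linear). Both follow immediately from the simplex-wise nature of the construction: $\hat{\ef}$ has length $\filtcnt+1$, every simplex-wise filtration of length $\ell$ has at most $\ell$ persistence intervals, and each bijection in Propositions~\ref{prop:ud-f-map} and~\ref{prop:ef-ud-map} preserves cardinality. Summing the costs gives $O(\filtcnt) + T(\filtcnt+1) + O(\filtcnt) = T(\filtcnt) + O(\filtcnt)$, as claimed.
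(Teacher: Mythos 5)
Your proposal is correct and follows essentially the same pipeline as the paper's proof: convert $\Fcal$ to $\ud$, then to $\ef$ and the coned filtration $\hat{\ef}$, run a standard persistence algorithm, and back-map the barcode via Propositions~\ref{prop:ef-ud-map} and~\ref{prop:ud-f-map}. You simply spell out the linear-time bookkeeping (the $\id_\Fcal$ table, per-interval $O(1)$ transformations, bounding the barcode size) that the paper leaves implicit in its one-sentence proof.
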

\begin{proof}
Given an input $\Fcal$, we first construct
the filtration $\ud$, the map $\id_\Fcal$, and the filtration $\hat{\ef}$ in linear time.
After running a non-zigzag persistence algorithm on $\hat{\ef}$, 
% we have the barcode $\Pers_*(\ef)$.
we can derive $\Pers_*(\Fcal)$ by the mappings in Proposition~\ref{prop:ud-f-map} and~\ref{prop:ef-ud-map},
in linear time.
\end{proof}

% \subsection{\red For relative}
% \tao{Do we still do this?}\tamal{No, let's skip it.}

\subsection{Experiments}
We implemented our algorithm for non-repetitive filtrations
described in this section and compared the performance
with {\tt Dionysus}~\cite{Dionysus}
and {\tt Dionysus2}~\cite{Dionysus2}, 
which are two versions of implementation of the zigzag persistence 
algorithm (for general filtrations) described in~\cite{carlsson2009zigzag-realvalue}.
In our implementation, we utilize the {\tt Simpers}~\cite{dey2014computing} 
software for the computation of non-zigzag persistence.

To generate non-repetitive filtrations,
we first take a simplicial complex with vertices in $\Real^3$,
and then take the height function $h$ along a certain axis
on the complex.
After this, we build an up-down filtration for the complex 
where the first half is the lower-star filtration of $h$ %the height function
and the second half is the (reversed) upper-star filtration of $h$. %the height function.
We then randomly perform outward switches on the up-down filtration
to derive a non-repetitive filtration.
Note that the simplicial complex is derived from a triangular mesh 
with a Vietoris-Rips complex on the vertices as a supplement.

Table~\ref{tab:perform} lists the running time of the 
three algorithms on various non-repetitive filtrations we generate.
The column `Mesh' contains the triangular mesh used to generate
each filtration;
the column `Length' contains the length (i.e., number of additions and deletions) 
of each filtration;
the columns `$\text{T}_\text{ours}$', `$\text{T}_\text{Dio}$', and `$\text{T}_\text{Dio2}$'
contain the running time of our algorithm, {\tt Dionysus}, and {\tt Dionysus2}
respectively;
% the column  denotes the running time of {\tt Dionysus};
the column `Speedup' equals $\min(\text{T}_\text{Dio},\text{T}_\text{Dio2})/\text{T}_\text{ours}$.
From the table, the speedup of our algorithm 
% w.r.t {\tt Dionysus}
is significant: on inputs with about 5 million of additions and deletions,
our algorithm is 50 or 100 times faster than the other two algorithms.
% {\tt Dionysus2}.
This speedup is expected as our algorithm only needs a slight amount of 
additional processing
other than the non-zigzag persistence computation.
Note that the newer version {\tt Dionysus2} is not always faster
than its older version (i.e., on the filtration of Dragon). Hence,
for a conservative estimate, we took the best
performer of the two versions for comparing with our algorithm.

%\tao{We also note that the running time of the two implementations of the algorithm in~\cite{carlsson2009zigzag-realvalue}
%appears to be unstable for the two filtrations of length five million,
%while the running time of 
%our algorithm 
% (which is mainly based on the standard persistence algorithm~\cite{Simpers})
%is stable for those two filtrations.}

\newcommand{\tabnum}[1]{\tt\small #1}
\begin{table}[!htb]
\centering
\caption{Running time of 
our algorithm, {\tt Dionysus}, and {\tt Dionysus2} 
on non-repetitive filtrations.
All tests were run on a commodity MacBook Pro
with a 3.1GHz Dual-Core CPU and
8GB memory.}
\label{tab:perform}
\begin{tabular}{lrrrrr}
% \toprule
\midrule
Mesh & \makecell[c]{Length} & \makecell[c]{$\text{T}_\text{ours}$} & \makecell[c]{$\text{T}_\text{Dio}$} & \makecell[c]{$\text{T}_\text{Dio2}$} & Speedup\\
\midrule
Bunny & \tabnum{419,372} & \tabnum{2.9s} & 
  \tabnum{46.6s} & 
  \tabnum{26.4s} &
  \tabnum{9.1}
%   16 
  \\
Armadillo & \tabnum{2,075,680} & \tabnum{14.8s} & 
  \tabnum{17m27.3s} & 
  \tabnum{8m41.6s} &
  \tabnum{35.2} \\
Hand & \tabnum{5,254,620} & \tabnum{44.6s} & 
  \tabnum{39m38.2s} & 
  \tabnum{23m59.0s} &
  \tabnum{53.3}
  \\
Dragon & \tabnum{5,260,700} & \tabnum{42.6s} & 
  \tabnum{1h11m52.5s} & 
  \tabnum{1h53m33.2s} &
%   159.9
  \tabnum{101.2}
  \\
% \bottomrule
\midrule
\end{tabular}
\end{table}

\section{Conclusions}
% \tao{Do we add this? For me it's best 
% to at least mention here that the duality for general may be an open question.
% And also for repetitive zigzag computation maybe.}\tamal{Yes, let's add a brief conclusion.}
In this paper, we study one of the two types of dualities for persistence~\cite{cohen2009extending,de2011dualities}
 in the zigzag setting, which is the duality for the absolute and relative
zigzag modules. The other duality for persistent
homology and cohomology addressed in~\cite{de2011dualities}
can be adapted to the zigzag setting straightforwardly;
see~\cite[Section 5]{dey2021graph}.
Furthermore, 
the weak duality result presented in Theorem~\ref{thm:dual} 
extends to cohomology modules 
% straightforwardly 
directly
due to the equivalence of barcodes for zigzag persistent homology and cohomology.

Our main finding is
a weak duality result for the absolute and relative
zigzag modules generated from non-repetitive filtrations.
This weak duality led to
two efficient algorithms for non-repetitive filtrations, one for manifolds and
the other in general. Naturally, it raises the question if a similar
result exists for repetitive filtrations.
% Given the difficulties discussed in the appendix, it is
% unlikely that the answer will be in affirmative. However, one
% can still ask if some meaningful 
% association exists that can be leveraged for computational purposes.
%one open question is still whether one can find associations 
%for absolute and relative zigzag in general
%which are meaningful especially for computation purposes
%(i.e., one can efficiently derive the barcode of one module
%from the other).

We have shown that computing
zigzag persistence for non-repetitive filtrations is almost as efficient
as computing standard persistence. 
However, for zigzag filtrations in general, 
% it is known~\cite{carlsson2009zigzag-realvalue,maria2014zigzag} that 
the persistence computation~\cite{carlsson2009zigzag-realvalue,maria2014zigzag} 
is still far more costly
than standard persistence (though theoretically the time complexities are the same~\cite{milosavljevic2011zigzag}). This research motivates
the question if further insights into zigzag persistence 
can lead to computing general zigzag persistence 
more efficiently in practice.

\section*{Acknowledgment:}
We thank the Stanford Computer Graphics Laboratory
for providing the triangular meshes used in the experiment
of this paper.

% \section{Application: General relative zigzag}

\bibliographystyle{plainurl}
\bibliography{refs}

\begin{thebibliography}{10}

\bibitem{BDM15}
Jean{-}Daniel Boissonnat, Tamal~K. Dey, and Cl{\'{e}}ment Maria.
\newblock The compressed annotation matrix: An efficient data structure for
  computing persistent cohomology.
\newblock {\em Algorithmica}, 73(3):607--619, 2015.

\bibitem{carlsson2010zigzag}
Gunnar Carlsson and Vin de~Silva.
\newblock Zigzag persistence.
\newblock {\em Foundations of Computational Mathematics}, 10(4):367--405, 2010.

\bibitem{carlsson2009zigzag-realvalue}
Gunnar Carlsson, Vin de~Silva, and Dmitriy Morozov.
\newblock Zigzag persistent homology and real-valued functions.
\newblock In {\em Proceedings of the Twenty-Fifth Annual Symposium on
  Computational Geometry}, pages 247--256, 2009.

\bibitem{CK13}
Chao Chen and Michael Kerber.
\newblock An output-sensitive algorithm for persistent homology.
\newblock {\em Comput. Geom.: Theory and Applications}, 46(4):435--447, 2013.

\bibitem{cohen2009extending}
David Cohen-Steiner, Herbert Edelsbrunner, and John Harer.
\newblock Extending persistence using {P}oincar{\'e} and {L}efschetz duality.
\newblock {\em Foundations of Computational Mathematics}, 9(1):79--103, 2009.

\bibitem{de2011dualities}
Vin de~Silva, Dmitriy Morozov, and Mikael Vejdemo-Johansson.
\newblock Dualities in persistent (co)homology.
\newblock {\em Inverse Problems}, 27(12):124003, 2011.

\bibitem{dey2014computing}
Tamal~K. Dey, Fengtao Fan, and Yusu Wang.
\newblock Computing topological persistence for simplicial maps.
\newblock In {\em Proceedings of the Thirtieth Annual Symposium on
  Computational Geometry}, pages 345--354, 2014.

\bibitem{dey2021computing-levelsetcyc}
Tamal~K. Dey and Tao Hou.
\newblock Computing optimal persistent cycles for levelset zigzag on
  manifold-like complexes.
\newblock {\em arXiv preprint arXiv:2105.00518}, 2021.

\bibitem{dey2021graph}
Tamal~K. Dey and Tao Hou.
\newblock Computing zigzag persistence on graphs in near-linear time.
\newblock In {\em 37th International Symposium on Computational Geometry, SoCG
  2021}, volume 189 of {\em LIPIcs}, pages 30:1--30:15. Schloss Dagstuhl -
  Leibniz-Zentrum f{\"{u}}r Informatik, 2021.

\bibitem{dey2020persistence}
Tamal~K. Dey, Marian Mrozek, and Ryan Slechta.
\newblock Persistence of the conley index in combinatorial dynamical systems.
\newblock In {\em 36th International Symposium on Computational Geometry (SoCG
  2020)}. Schloss Dagstuhl-Leibniz-Zentrum f{\"u}r Informatik, 2020.

\bibitem{edelsbrunner2000topological}
Herbert Edelsbrunner, David Letscher, and Afra Zomorodian.
\newblock Topological persistence and simplification.
\newblock In {\em Proceedings 41st Annual Symposium on Foundations of Computer
  Science}, pages 454--463. IEEE, 2000.

\bibitem{Gabriel72}
Peter Gabriel.
\newblock {Unzerlegbare Darstellungen I}.
\newblock {\em Manuscripta Mathematica}, 6(1):71--103, 1972.

\bibitem{hatcher2002algebraic}
Allen Hatcher.
\newblock {\em Algebraic Topology}.
\newblock Cambridge University Press, 2002.

\bibitem{holme2012temporal}
Petter Holme and Jari Saram{\"a}ki.
\newblock Temporal networks.
\newblock {\em Physics Reports}, 519(3):97--125, 2012.

\bibitem{maria2014zigzag}
Cl{\'e}ment Maria and Steve~Y. Oudot.
\newblock Zigzag persistence via reflections and transpositions.
\newblock In {\em Proceedings of the Twenty-Sixth Annual ACM-SIAM Symposium on
  Discrete Algorithms}, pages 181--199. SIAM, 2014.

\bibitem{milosavljevic2011zigzag}
Nikola Milosavljevi{\'c}, Dmitriy Morozov, and Primoz Skraba.
\newblock Zigzag persistent homology in matrix multiplication time.
\newblock In {\em Proceedings of the Twenty-Seventh Annual Symposium on
  Computational Geometry}, pages 216--225, 2011.

\bibitem{Dionysus}
Dmitriy Morozov.
\newblock {\tt Dionysus}.
\newblock URL: \url{https://www.mrzv.org/software/dionysus/}.

\bibitem{Dionysus2}
Dmitriy Morozov.
\newblock {\tt Dionysus2}.
\newblock URL: \url{https://www.mrzv.org/software/dionysus2/}.

\bibitem{munkres2018elements}
James~R. Munkres.
\newblock {\em Elements of {A}lgebraic {T}opology}.
\newblock CRC Press, 2018.

\bibitem{zomorodian2005computing}
Afra Zomorodian and Gunnar Carlsson.
\newblock Computing persistent homology.
\newblock {\em Discrete \& Computational Geometry}, 33(2):249--274, 2005.

\end{thebibliography}

\appendix

\section{Proof of Theorem~\ref{thm:recover-F-from-KbyF}}
\label{sec:pf-thm-recover-F-from-KbyF}
% \begin{proof}
For the proof, in Table~\ref{tab:dual-rewritten}, we rewrite the correspondence in Theorem~\ref{thm:dual}.
We then have the following:

\begin{table}[!t]
\caption{The correspondence of intervals in Theorem~\ref{thm:dual}
rewritten for the proof of Theorem~\ref{thm:recover-F-from-KbyF}.}
\label{tab:dual-rewritten}
\centering
\begin{tabular}{lllcc}
    % \toprule
    \midrule
    \multicolumn{3}{c}{$\Pers(\Hm_{*}(\Fcal))$} & & $\Pers(\Hm_\Dim(K/\Fcal))$ \\ 
    \cmidrule{1-3}\cmidrule{5-5}
    % \hline
    \makecell[c]{Type} & Interval & Dim & & Interval(s)
    % & Dim 
    \\ 
    \midrule 
    % \\[2pt]
    % \rule{0pt}{2em}
    \makecell[l]{closed-open,\\open-closed} & $[\birth,\death]$ & $\Dim-1$ & $\mapsto$ &
    $[\birth,\death]$ 
    \\\cmidrule{1-5} 
    closed-closed & $[\birth,\death]$ & $\Dim$ & $\mapsto$ &
    $[0,\birth-1]$, $[\death+1,\filtcnt]$  
    \\\cmidrule{1-5} 
    open-open & $[\birth,\death]$ & $\Dim-1$ & $\mapsto$ &
    $[0,\death]$, $[\birth,\filtcnt]$  
    \\
    % \bottomrule
    \midrule
\end{tabular}
\end{table}

\begin{itemize}
    \item There is an identity map from closed-open and open-closed 
    intervals in $\Pers(\Hm_{\Dim-1}(\Fcal))$
    to intervals in $\Pers(\Hm_{\Dim}(K/\Fcal))$
    which does not start with $0$ and does not end with $\filtcnt$.
    The reason is that all intervals in $\Pers(\Hm_{\Dim}(K/\Fcal))$
    which are correspondence of closed-closed or open-open intervals
    in $\Pers(\Hm_*(\Fcal))$ either starts with $0$ or ends with $\filtcnt$.
    Therefore, 
    from $\Pers(\Hm_{\Dim}(K/\Fcal))$,
    one can recover closed-open and open-closed 
    intervals in $\Pers(\Hm_{\Dim-1}(\Fcal))$
    in linear time.
    
    \item There is an even number of intervals in $\Pers(\Hm_{\Dim}(K/\Fcal))$
    starting with $0$ or ending with $\filtcnt$. Furthermore,
    these intervals form a pairing such that each pair comes from a closed-closed 
    or open-open interval in $\Pers(\Hm_*(\Fcal))$ as shown in Table~\ref{tab:dual-rewritten}.
    Let $[0,i]$ and $[j,\filtcnt]$ be such a pair.
    By inspecting Table~\ref{tab:dual-rewritten},
    we notice that
    $i$ must be an open death index and $j$ must be an open birth index.
    It follows that the interval $[0,i]$ ends because of the addition of the $\Dim$-simplex $\sG_i$
    % to $K_i$ 
    and the interval $[j,\filtcnt]$ starts 
    because of the deletion of the $\Dim$-simplex $\sG_{j-1}$.
    % from $K_{j-1}$ 
    % (for more details on this, see the event $\evtrel$ in Appendix~\ref{sec:no-dual}).
    % Since either $[j,i]$ or $[i+1,j-1]$ form an interval 
    We have the following cases: 
    \begin{itemize}
        \item $i<j$: In this case, $[0,i]$ and $[j,\filtcnt]$ are disjoint,
        which means that they must come from the closed-closed interval $[i+1,j-1]\in\Pers(\Hm_\Dim(\Fcal))$ 
        as in Table~\ref{tab:dual-rewritten}.
        Notice that $[i+1,j-1]$ starts because of the addition of $\sG_i$
        and ends because of the deletion of $\sG_{j-1}$.
        Since indeed the zigzag persistence of $\Fcal$ can be independently defined
        on each connected component of $K$,
        we must have that $\sG_i$ and $\sG_{j-1}$
        come from the same connected component of $K$.
        
        \item $i\geq j$: In this case, $[0,i]$ intersects $[j,\filtcnt]$,
        which means that they must come from the open-open interval $[j,i]\in\Pers(\Hm_{\Dim-1}(\Fcal))$.
        Notice that $[j,i]$ starts because of the deletion of $\sG_{j-1}$
        and ends because of the addition of $\sG_i$.
        % Since indeed the zigzag persistence of $\Fcal$ can be independently defined
        % on each connected component of $K$,
        Then similarly as for the previous case,
        $\sG_i$ and $\sG_{j-1}$ must
        come from the same connected component of $K$.
    \end{itemize}
    
    From the above observations, in order to recover 
    the closed-closed intervals in $\Pers(\Hm_{\Dim}(\Fcal))$
    and the open-open
    intervals in $\Pers(\Hm_{\Dim-1}(\Fcal))$,
    one only needs to do the following,
    which can be done in linear time:
    \begin{itemize}
        \item For each interval $[0,i]\in\Pers(\Hm_{\Dim}(K/\Fcal))$,
        take the $\Dim$-simplex $\sG_i$,
        and for each interval $[j,\filtcnt]\in\Pers(\Hm_{\Dim}(K/\Fcal))$,
        take the $\Dim$-simplex $\sG_{j-1}$.
        Pair all the $\sG_i$'s and $\sG_{j-1}$'s (and hence their corresponding intervals)
        belonging to the same connected component of $K$.
        Note that the pairing is unique because
        closed-closed intervals in $\Pers(\Hm_{\Dim}(\Fcal))$
        and open-open intervals in $\Pers(\Hm_{\Dim-1}(\Fcal))$
        bijectively map to the basis of $\Hm_\Dim(K)$
        and hence bijectively map to the components of $K$
        (this can be seen from the proof of Theorem~\ref{thm:dual}).
        
        \item For each pair $[0,i]$ and $[j,\filtcnt]$ in the previous step,
        if $i<j$, then we have a closed-closed interval $[i+1,j-1]\in\Pers(\Hm_\Dim(\Fcal))$;
        otherwise, we have an open-open interval $[j,i]\in\Pers(\Hm_{\Dim-1}(\Fcal))$.
     \end{itemize}
\end{itemize}

One final thing we need to verify is that 
$\Pers(\Hm_{\Dim}(\Fcal))$ contains only closed-closed intervals,
which follows from the fact that $K$ contains no $(\Dim+1)$-simplices.
% which is evident from the event $\evtabs$ presented in Appendix~\ref{sec:no-dual}.
% \end{proof}

\end{document}